\newtheorem{thrm}{Theorem}
\newcommand{\bra}[1]{\left\langle{#1}\right\vert}
\newcommand{\ket}[1]{\left\vert{#1}\right\rangle}
\newcommand{\qw}[1][-1]{\ar @{-} [0,#1]}
\newcommand{\qwx}[1][-1]{\ar @{-} [#1,0]}
\newcommand{\cw}[1][-1]{\ar @{=} [0,#1]}
\newcommand{\gate}[1]{*{\xy *+<.6em>{#1};p\save+LU;+RU **\dir{-}\restore\save+RU;+RD **\dir{-}\restore\save+RD;+LD **\dir{-}\restore\POS+LD;+LU **\dir{-}\endxy} \qw}
\newcommand{\measureD}[1]{*{\xy*+=+<.5em>{\vphantom{#1}}*\cir{r_l};p\save*!R{#1} \restore\save+UC;+UC-<.5em,0em>*!R{\hphantom{#1}}+L **\dir{-} \restore\save+DC;+DC-<.5em,0em>*!R{\hphantom{#1}}+L **\dir{-} \restore\POS+UC-<.5em,0em>*!R{\hphantom{#1}}+L;+DC-<.5em,0em>*!R{\hphantom{#1}}+L **\dir{-} \endxy} \qw}
\newcommand{\control}{*-=-{\bullet}}
\newcommand{\ctrl}[1]{\control \qwx[#1] \qw}
\newcommand{\multigate}[2]{*+<1em,.9em>{\hphantom{#2}} \qw \POS[0,0].[#1,0];p !C *{#2},p \save+LU;+RU **\dir{-}\restore\save+RU;+RD **\dir{-}\restore\save+RD;+LD **\dir{-}\restore\save+LD;+LU **\dir{-}\restore}
\newcommand{\ghost}[1]{*+<1em,.9em>{\hphantom{#1}} \qw}
\newcommand{\rstick}[1]{*!L!<-.5em,0em>=<0em>{#1}}
\newcommand{\lstick}[1]{*!R!<.5em,0em>=<0em>{#1}}
\newcommand{\Qcircuit}{\xymatrix @*=<0em>}
\def\controlled#1{\mathrm{C}_{#1}}
\def\CZ{\controlled Z}
\newcommand{\reals}{\mathbb{R}}
\newcommand{\avg}[1]{\left\langle{{#1}}\right\rangle}
\newcommand{\norm}[1]{\left\|{{#1}}\right\|}
\DeclareMathOperator{\Det}{Det}
\begin{document}

\title{Quantum Computing with Continuous-Variable Clusters}

\author{Mile Gu}
\affiliation{Department of Physics, University of Queensland, St Lucia, Queensland 4072, Australia.}

\author{Christian Weedbrook}
\affiliation{Department of Physics, University of Queensland, St Lucia, Queensland 4072, Australia.}

\author{Nicolas C. Menicucci}
\affiliation{Department of Physics, University of Queensland, St Lucia, Queensland 4072, Australia.}
\affiliation{Department of Physics, Princeton University, Princeton, New Jersey 08544, USA}
\affiliation{Perimeter Institute for Theoretical Physics, Waterloo, Ontario N2L 2Y5, Canada}

\author{Timothy C. Ralph}
\affiliation{Department of Physics, University of Queensland, St Lucia, Queensland 4072, Australia.}

\author{Peter van Loock}
\affiliation{Optical Quantum Information Theory Group, Max Planck Institute for the Science of Light,
Institute of Theoretical Physics I, Universit\"{a}t Erlangen-N\"{u}rnberg,
Staudtstr.7/B2, 91058 Erlangen, Germany.}

\date{\today}

\begin{abstract}
Continuous-variable cluster states offer a potentially promising method of implementing a quantum computer. This paper extends and further refines theoretical foundations and protocols for experimental implementation. We give a cluster-state implementation of the cubic phase gate through photon detection, which, together with homodyne detection, facilitates universal quantum computation. In addition, we characterize the offline squeezed resources required to generate an arbitrary graph state through passive linear optics. Most significantly, we prove that there are
universal states for which the offline squeezing per mode does not increase with the size of the cluster.
Simple representations of continuous-variable graph states are introduced to analyze graph state transformations under measurement and the existence of universal continuous-variable resource states.
\end{abstract}

\pacs{03.67.Lx, 42.50.Dv}

\maketitle

\section{Introduction}

Nonstandard models of quantum computation are important, both practically and conceptually. On the one hand, they lead to new experimental methods to realize quantum computers;
on the other hand, they offer additional insight on the often counterintuitive properties of quantum information. Continuous-variable (CV) quantum computation~\cite{Lloyd1999} not only provides a framework for description of interacting quantum fields~\cite{Braunstein2005a}, but also offers additional realizations of quantum computers when each CV mode is assigned a suitable qubit encoding~\cite{Gottesman2001,Ralph2003}. Meanwhile, cluster-state computation~\cite{Raussendorf2001} showed that the implementation of many difficult Hamiltonians may be avoided by just applying single-qubit measurements on a suitably prepared multi-party entangled resource state;
hence challenging traditional intuition that the implementation of a general unitary operator requires unitary evolution.

CV cluster-state computation is a fusion of these protocols~\cite{Zhang2006,Menicucci2006}. In addition to its intrinsic conceptual interest, the formalism presents a potential alternative implementation of a quantum computer. Optical CV cluster states have distinct advantages over discrete analogues~\cite{Nielsen2004}.  Any such cluster state may be generated deterministically through offline squeezing and passive linear optics~\cite{vanLoock2007}, while all multi-mode Gaussian transformations performed through the cluster
require only homodyne detection~\cite{Menicucci2006}.  In addition,
via alternative techniques, large CV clusters can be generated in a single step using just one optical parametric oscillator (OPO) and no interferometer~\cite{Menicucci2007}; some such proposals also have significant scaling potential~\cite{Menicucci2008,Flammia2008a}.  These features of CV cluster states suggest that they offer a fertile experimental testing ground for the principles of measurement-based computation~\cite{Raussendorf2003}.  CV cluster states involving four optical modes have been demonstrated experimentally~\cite{Yonezawa2008,Yukawa2008,Su2007}.

In this paper, we expand and extend the results given in Ref.~\cite{Menicucci2006}. First, we apply the CV stabilizer formalism~\cite{vanLoock2007,Barnes2004} to give simple phase-space and algebraic representations of CV graph states. We then apply these tools to compute how graph states transform through quadrature measurements and show that there exist universal graph states---cluster states---that can be used as resource states for the implementation of an arbitrary CV circuit~\footnote{In this article, a `graph state' can have an arbitrary graph, while a `cluster state' must be a member of a family of graph states that is universal for quantum computation.  The reader should be aware that conventions vary in the literature, and these terms are sometimes used interchangeably.  We will, on occasion, use the term `cluster' on its own, whose meaning at the time should be clear from the context.}.

Second, we extend the results of Ref.~\cite{vanLoock2007} by bounding the offline squeezed resources required to construct an arbitrary graph state to a given precision through passive linear optics. These results are applied to several graph states of common interest, including linear graph states and universal cluster states. We show that
{\it the level of squeezing required per mode does not grow with the size of the cluster state};
a necessary criterion to perform quantum computation efficiently through offline resources. In addition, we prove that even if online squeezing is assumed to be as readily available as its offline counterpart, the generation of CV cluster states via offline resources remains less costly.

Third, we detail an explicit optical implementation of a non-Gaussian operator through photon counting and homodyne measurements and thus propose an explicit measurement sequence on CV cluster states that facilitates universal quantum computation. We also present an alternative formalism such that the embedding of non-Gaussian resource states allows for universal quantum computation entirely by homodyne measurements alone. Together these results refine many of the details of the CV formalism, offer tools for further development of CV cluster-state protocols, and present a variety of potentially promising and viable experiments.

The structure of the paper is as follows. Section II describes background material on CV quantum computation and qubit cluster states that will be useful later in the paper. Section III introduces graph states for CV modes (qumodes) and describes their stabilizer and phase-space representations. Section IV demonstrates that such states, coupled with single-qumode measurements, are capable of implementing any specific unitary. Section V explores how CV graph states transform under measurements and applies these results to construct a CV cluster state that may be used as a resource for universal quantum computation. In Section VI, the case of imperfect CV clusters and the resulting distortions are analyzed and discussed. Section VII discusses the optical implementation of CV cluster-state computing, including the resource requirements for generating arbitrary CV clusters and explicit implementation of a nonlinear gate that facilitates universal quantum computation. Section VIII concludes the paper.

\section{Preliminaries}
In this section, we review some of the background knowledge relevant to CV cluster-state computation, and its optical implementation. Familiarity with quantum computation and quantum optics to the level of Refs.~\cite{Nielsen2000,Gerry2005} is assumed. CV cluster-state computation combines the concepts of CV quantum computation and cluster states. For a more extensive review of these topics, please see Refs.~\cite{Braunstein2005a,Raussendorf2003}.

\subsection{Continuous-Variable Quantum Computation}

\subsubsection{CV State Representations}

In traditional quantum computation, which uses discrete quantum variables, the basic unit of information is the qubit, a system with a two-dimensional Hilbert space with computational basis states $\ket{0}$ and $\ket{1}$ and conjugate basis states~$\ket{+}$ and~$\ket{-}$.  The two bases are related by the Hadamard operation~$H$.

The analogue for CV quantum computation is the qumode~\footnote{We use the terms `mode' and `qumode' interchangeably.}, a quantum system with an infinite-dimensional Hilbert space spanned by a continuum of orthogonal states $\ket{s}_q$ for each $s \in \mathbb{R}$, with orthogonality condition $\bra{r}_q\ket{s}_q = \delta(r-s)$.  The conjugate basis states are labeled~$\ket{s}_p$.  The two bases are related by a Fourier transform operation:
\begin{align}\nonumber
\ket{s}_p &= \frac{1}{\sqrt{2 \pi}}\int_{-\infty}^\infty dr\, e^{i r s} \ket{r}_q = F \ket{s}_q, \\ \label{eq:dftstates}
\ket{s}_q  &= \frac{1}{\sqrt{2 \pi}}\int_{-\infty}^\infty dr\, e^{- i r s} \ket{r}_p = F^{\dag} \ket{s}_p.
\end{align}
The unitary operator~$F$ is defined by this relation.    %
%
%
In quantum protocols, qumodes may be used to encode qubits (e.g.,\ the GKP encoding~\cite{Gottesman2001} or a coherent-state encoding~\cite{Ralph2003}), or they may be employed directly for CV quantum computation~\cite{Lloyd1999,Bartlett2002}.

We may now define corresponding observables, position~$\hat{q}$ and momentum~$\hat{p}$, such that $\hat{q}\ket{s}_q = s \ket{s}_q$ and $\hat{p}\ket{s}_p = s \ket{s}_p$, with $[\hat{q},\hat{p}] = i$ where $\hbar = 1$.  Here, $\hat{p}$ is the generator of positive translations in position, while $-\hat{q}$ is the generator of positive translations in momentum. Thus, we can write an arbitrary position and momentum eigenstate as
\begin{align}
\ket{s}_q = X(s) \ket{0}_q, \qquad \ket{s}_p = Z(s) \ket{0}_p,
\end{align}
where $X(s) = e^{-is\hat{p}}$ and $Z(s) = e^{is\hat{q}}$ represent displacements in the computational and conjugate basis, respectively.  An arbitrary pure quantum state $\ket{\phi}$ of a CV system may be decomposed as a superposition of either $\ket{s}_p$ or $\ket{s}_q$.

While the computational basis or its conjugate is uncountable, any physical state $\ket{\phi}$ may nevertheless be decomposed into a countably infinite basis. For particles in a harmonic trap or quantum optical fields we can use the Fock basis of definite particle number $\{\ket{0}, \ket{1}, \ldots\}$ where $\hat{n} = \hat{a}^\dag \hat{a}$ is the number operator, with $\hat{n} \ket{n} = n \ket{n}$, the usual bosonic commutator $[\hat{a},\hat{a}^\dag]=1$, and $\hat{a} = (\hat{q} + i \hat{p})/\sqrt{2}$.  In the terminology of quantum optics, $\hat{q}$ and $\hat{p}$ are referred to as the `position quadrature' and `momentum quadrature' for a given mode, respectively.

A qumode is in a minimum uncertainty state if the product of the quadrature deviations $\Delta \hat{q}$ and $\Delta \hat{p}$ is minimized, i.e.,\  $\Delta \hat{q} \Delta \hat{p} = \frac{1}{2}$. The ground or vacuum state $\ket{0}$ defined by $\hat{a} \ket{0} = 0$ is an example of particular theoretical and practical interest and represents a Gaussian superposition centered about $0$ in either the computational or the conjugate basis:
\begin{equation}
\ket{0} = \frac{1}{\pi^{1/4}}\int ds\, e^{-s^2/2} \ket{s}_q = \frac{1}{\pi^{1/4}}\int ds\, e^{-s^2/2} \ket{s}_p.
\end{equation}
The vacuum state is a specific example of a {\it Gaussian state} whose quadratures exhibit Gaussian statistics.

The state of
a single qumode can be described by its Wigner Function~\cite{Leonhardt1997}
\begin{equation}\label{eqn:wignerdef}
W(x,y) = \frac{1}{2 \pi} \int dw \bra{x - \frac{w}{2}}_q\hat{\rho}\ket{x+ \frac{w}{2}}_q e^{iwy}.
\end{equation}
The Wigner function is a useful tool for describing arbitrary Gaussian states, which
are completely determined by the first and second moments of the quadratures.
Any state with a Gaussian Wigner function is, by definition, a Gaussian state.  For instance, the Wigner function of the vacuum state is $e^{-(x^2 + y^2)}/\pi$, a multivariate Gaussian distribution with a variance of~$1/2$ in both quadratures. A multi-mode state such as a CV cluster state is described by a multi-mode Wigner function,
a straightforward extension of Eq.~(\ref{eqn:wignerdef}). Multi-mode Gaussian states are then given
by a second-moment covariance matrix and a first-moment vector \cite{Braunstein2005a}.

\subsubsection{Gaussian Transformations}

In quantum optics, the Hamiltonians corresponding to the experimentally most feasible interactions are at most quadratic in $\hat{q}$ and $\hat{p}$. Such interactions transform Gaussian states to Gaussian states, and are referred to as Gaussian transformations or linear unitary Bogoliubov (LUBO) transformations.  If we collect the quadrature operators into an operator-valued vector $\hat{\mathbf{v}} = (\hat{q}_1,\hat{q}_2,\ldots,\hat{p}_1,\hat{p}_2,\ldots)^T$, then a general Gaussian transformation~$\hat{U}$ transforms $\hat{\mathbf{v}}$ according to:
\begin{equation}\label{eqn:gauss}
\hat U^{\dag} \hat{\mathbf{v}} \hat U  =  L \hat{\mathbf{v}} + \mathbf{c}, \qquad \Det (L) = 1,
\end{equation}
where $L$ is a $2n \times 2n$ symplectic matrix and $\mathbf{c}$ is a vector of $2n$ constants that represent quadrature displacements. We list a number of standard {\it single-mode} Gaussian transformations that will be used in this paper, along with their associated Heisenberg action on the quadrature operators.

\begin{description}
\item[(a) Rotations:] $R(\theta) = e^{i\theta(\hat{q}^2 + \hat{p}^2)/2}$ rotates a state counterclockwise in phase space by an angle $\theta$. Rotations are also referred to as phase shifts.
\begin{align}
 \left(\begin{array}{c}
\hat{q}    \\
\hat{p}     \\
\end{array} \right)  &\rightarrow   \left(\begin{array}{cc}
\cos\theta  & -\sin\theta \\
\sin\theta  & \cos\theta \\
\end{array} \right) \left(\begin{array}{c}
\hat{q}    \\
\hat{p}     \\
\end{array} \right) = M_R(\theta)  \left(\begin{array}{c}
\hat{q}    \\
\hat{p}     \\
\end{array} \right) \label{eqn:rotations}
\end{align}
where $M_R(\theta)$ is the rotation matrix describing the linear Heisenberg action on the quadrature operators. Note that $R(\pi/2) = F$.

\item[(b) Quadrature Displacements:] $Z(s) = e^{is \hat{q}}$ displaces a state in phase space by $s$ in momentum.
\begin{align}
 \left(\begin{array}{c}
\hat{q}    \\
\hat{p}     \\
\end{array} \right)  \rightarrow  \left(\begin{array}{c}
\hat{q}    \\
\hat{p}     \\
\end{array} \right) + \left(\begin{array}{c}
0    \\
s     \\
\end{array} \right)
\end{align}
Similarly, $X(s) = e^{-is\hat p}$ displaces a state in phase space by $s$ in position.  Note the sign in the exponential of each.

\item[(c) Squeezing:] $S(s) = e^{-i\ln(s)(\hat{q}\hat{p}+\hat{p}\hat{q})/2}$ squeezes the position quadrature by a factor of $s$, while stretching the conjugate quadrature by $1/s$.
\begin{align}
\left(\begin{array}{c}\nonumber
\hat{q}    \\
\hat{p}     \\
\end{array} \right)  &\rightarrow   \left(\begin{array}{cc}
s  & 0\\
0  & 1/s \\
\end{array} \right) \left(\begin{array}{c}
\hat{q}    \\
\hat{p}     \\
\end{array} \right) = M_S(s)  \left(\begin{array}{c}
\hat{q}    \\
\hat{p}     \\
\end{array} \right)
\end{align}
where $M_S(s)$ is the squeeze matrix describing the linear Heisenberg action on the quadrature operators.

\item[(d) Shearing:] $D_{2,\hat q}(s) = e^{is\hat{q}^2/2}$ shears a state with respect to the $\hat{q}$ axis by a gradient of $s$. The shearing operator $e^{is\hat{q}^2/2}$ is also referred to as the phase gate.
\begin{align}\nonumber
\left(\begin{array}{c}
\hat{q}    \\
\hat{p}     \\
\end{array} \right)  &\rightarrow   \left(\begin{array}{cc}
1  & 0\\
s  & 1 \\
\end{array} \right) \left(\begin{array}{c}
\hat{q}    \\
\hat{p}     \\
\end{array} \right) = M_D(s)  \left(\begin{array}{c}
\hat{q}    \\
\hat{p}     \\
\end{array} \right)
\end{align}
where $M_D(s)$ is the shearing matrix describing the linear Heisenberg action on the quadrature operators.

\end{description}
%

Operations~(a) and~(b) correspond to the most readily available single-mode Gaussian transformations, requiring only phase shifts and coherent-state sources.  To access all possible single-mode Gaussian transformations, we will need squeezing interactions to stretch and compress phase space uncertainties. Two such operations are given by~(c) and~(d). In experimental quantum optics, such interactions require nonlinear optical processes
(while the Heisenberg in-out relations remain linear). Typical methods involve optical parametric amplification, which allows one to generate squeezed vacuum states~$S(s)\ket{0}$. We refer to such processes as offline squeezing,
solely involving the preparation of squeezed vacuum states.

Offline squeezing contrasts with the online squeezing, where the squeezing operator is applied ``on-line'' to an arbitrary state of the electromagnetic field.  In experimental quantum optics, it is common to refer to $S(s) \ket{0}$ as a state with $10 \log (s^2)~\text{dB}$ of squeezing, alluding to the view that squeezing can be regarded as a physical resource~\cite{Braunstein2005a}.  While the generation of reasonably high levels ($10~\text{dB}$) of offline squeezing can be experimentally achieved~\cite{Vahlbruch2008}, online squeezing~\cite{LaPorta1989} is far more demanding,
and is currently only experimentally viable for modest values of $s$, for instance, online squeezing of $2.5~\text{dB}$~\cite{Yoshikawa2007}
utilizing offline squeezed ancilla states~\cite{Filip2005}.

An arbitrary single-mode Gaussian transformation may be decomposed into (a) rotations, (b) quadrature displacements, and either (c) squeezing or (d) shearing operations. The addition of a two-mode Gaussian gate, such as a beamsplitter or $\CZ = e^{i\hat{q} \otimes \hat{q}}$, allows for arbitrary {\it multi-mode} Gaussian transformations.
To account for imperfect Gaussian transformations, e.g. affected by photon losses and thermal excess noises,
Gaussian unitary gates are generalized to Gaussian operations (Gaussian completely positive maps)~\cite{EisertPlenio2003}.
These also include Gaussian measurements such as homodyne detection.
Any quantum evolution consisting solely of Gaussian operations on Gaussian states may be efficiently simulated on a classical computer~\cite{Bartlett2002}.  Therefore, some sort of non-Gaussian element is required for universal quantum computation.  In fact, at least in principle, \emph{any} such element will do~\cite{Lloyd1999}.

\subsubsection{A Universal Gate Set}

We follow the definition of universal CV quantum computation outlined in Ref.~\cite{Lloyd1999}. A system is universal if it can simulate the action of a Hamiltonian consisting of a general polynomial of $\hat{p}$ and $\hat{q}$ to any fixed accuracy.

For a single qumode, all Gaussian operations together with any single nonlinear (non-Gaussian, at least cubic) interaction are capable of universality~\cite{Lloyd1999}. For example, the set of gates $D_{k,\hat{q}}(s) = \exp (is\hat{q}^k/k)$, for $k = 1,2,3$ for all~$s \in \reals$, together with the Fourier transform~$F$, are sufficient for universal single-mode quantum computation (that is, this set can be used to implement any single-mode unitary operation up to arbitrary accuracy).  Here $D_{1,\hat{q}}(s)$ is a displacement, $D_{2,\hat{q}}(s)$ is a shear, and $D_{3,\hat{q}}(s)$ is the cubic phase gate~\cite{Gottesman2001}. Adding to this set any nontrivial two-mode interaction allows for universal quantum computation.  For theoretical simplicity, here we shall use the $\CZ$~gate (defined above) to complete the universal set, while another possibility is a simple beamsplitter interaction.

It should be noted that such statements about universality do not account for noise. Presently, all general error correction codes require discretization at some level. Hence, currently CV quantum computation is only known to be possible for discretized encodings of CVs.

\subsection{Cluster-State Computation}

In the qubit-based cluster-state model~\cite{Raussendorf2001}, quantum computation is implemented by a series of single-qubit measurements on a specially prepared, entangled state of many qubits, most generally, referred to as a graph state~\cite{Hein2004}. Such states may be conveniently described by graphs. A graph $G = (V,E)$ consists of a vertex set $V = \{v_i\}_{i=1}^n$ and a set of edges $E$. We say that two vertices, $v_i$ and~$v_j$, are neighbors if there exists an edge $(v_i,v_j) \in E$ that connects them. For an introduction to graphs, see Ref.~\cite{West2000}.

For any undirected, unweighted graph $G = (V,E)$ having no self-loops, we can construct a corresponding graph state as follows.   For each vertex of $G$, we initialize a qubit in the state $\ket{+} = \frac{1}{\sqrt2}(\ket{0} + \ket{1})$. For every edge in $G$ linking two vertices, we apply a CSIGN gate (which is sometimes called the CPHASE gate) to the two corresponding qubits. Any unitary operation can be implemented on a tailor-made graph state using an appropriate sequence of single-qubit measurements.

The stabilizer formalism~\cite{Gottesman1998a} offers an efficient way to represent any graph state.  A state $\ket{\phi}$ is stabilized by an operator $K$ if it is an eigenstate of $K$ with unit eigenvalue, i.e.,\ $K \ket{\phi} = \ket{\phi}$. The set of stabilizers form an abelian group under operator multiplication.  If such a set exists for a given state, then we call that state a stabilizer state, and we may use the generators of its stabilizer group to uniquely specify it. The stabilizers for qubit graph states are well known. Given that $\ket{\phi}$ is an $n$-qubit graph state with associated graph $G = (V,E)$, it is stabilized by
\begin{equation}\label{eqn:discrete_stab}
K_i =  X_i \prod_{j \in N(i)} Z_j
\end{equation}
where $N(i)$ denotes the set of indices that define the set of vertices that neighbor $v_i$, i.e., $N(i) = \{j \mid (v_j,v_i) \in E$\}.
The operators $X$ and $Z$ are the usual Pauli operators for qubits.

There exist universal families of graph states that may be used to implement any unitary operation solely through the choice of single-qubit measurements made on it. Originally, such states are called cluster states, and cluster-state computation involves the implementation of arbitrary algorithms solely by an adaptive local measurement scheme.  The scheme involves only single-qubit measurements and is called `adaptive', because the choice of measurement bases depends both on the algorithm to be implemented and, in general, on the measurement outcomes during a
cluster computation.
Cluster states, when combined with adaptive local measurements, are thus universal resources for quantum computation~\cite{Raussendorf2001}.
For more recent developments on possible resource states for universal quantum computation and their requirements, see Refs.~\cite{GrossPRL2007,GrossPRA2007,vandenNest2006,vandenNest2007}.

\section{Continuous-Variable Graph States}

The concepts of qubit cluster-state computation can be extended to the continuous-variable regime. We outline CV graph states~\cite{Zhang2006}, which can be used as resource states for universal CV quantum computation~\cite{Menicucci2006}.  We then introduce nullifiers, a variation of the CV stabilizer formalism~\cite{vanLoock2007,Barnes2004}, and use them to compute how CV graph states transform under quadrature measurements.

The basic premise of CV graph states may be obtained by replacing elements of qubit cluster-state computation with their CV analogues: $\ket{+}$ becomes $\ket{0}_p$, $X$-measurements are replaced by measurements of $\hat{p}$ (and $Z$ with $\hat{q}$), and the CSIGN interaction is replaced by the $\CZ = e^{i\hat{q}_i \hat{q}_j}$ gate, which is used to entangle nodes~$i$ and~$j$. Each CV graph state can also be defined by a graph $G = (V,E)$, where the set of vertices $V$ corresponds to the individual qumodes, and the edge set $E$ determines which qumodes interact via the $\CZ$ operation.

It should be mentioned that one way to generalize the idea of a CV graph state is to use weighted edges for the graph.  The edge weights specify the strength of the $\CZ$ interaction between the connected nodes: $\CZ(t) = e^{it \hat{q} \otimes \hat{q}}$, where $t$~is the edge weight.  CV weighted graph states have a variety of uses~\cite{Zhang2009,Zhang2008a,Menicucci2007}, but in this article we will confine further discussion to unweighted graphs (or, equivalently, graphs whose edge weights are all $+1$).

\subsection{Stabilizers and Nullifiers}

Analogous to the case for qubit graph states, the stabilizer formalism for CV systems~\cite{vanLoock2007,Barnes2004} can be used to specify any CV graph state completely~\cite{Zhang2008a}.  We say that a zero-momentum eigenstate $\ket{0}_p$ is stabilized by $X(s)$ for all~$s$, since it is a $+1$-eigenstate of those operators.  This holds even though $X(s)$, being non-Hermitian, is not an observable.  Notice that if $K$ stabilizes $\ket{\phi}$, then $U K U^\dag$ stabilizes $U \ket{\phi}$. This observation, together with the relation $e^{i\hat{q}_1\hat{q}_2} \hat{p}_1 e^{-i\hat{q}_1\hat{q}_2} = \hat{p}_1 - \hat{q}_2$, allow us to write the stabilizers for an arbitrary CV graph state $\ket{\phi}$ on $n$ qumodes with graph $G = (V,E)$:
\begin{equation}\label{eqn:cont_stab}
K_i(s) =  X_i(s) \prod_{j \in N(i)} Z_j(s), \qquad i = 1,\ldots n
\end{equation}
for all $s \in \mathbb{R}$, where $N(i)$ is defined as before in Eq.~\eqref{eqn:discrete_stab}, and the subscript indicates which qumode the displacement acts on.

This group is conveniently defined by its Lie algebra, the space of operators $H$ such that $H \ket{\phi} = 0$. We refer to any element of this algebra as a \emph{nullifier} of $\ket{\phi}$ and the entire algebra as the \emph{nullifier space} of~$\ket{\phi}$.  Being Hermitian, every nullifier is an observable.  Any ideal graph state has a simple nullifier representation.

\begin{thrm}
The nullifier space of an $n$-qumode graph state $\ket{\phi}$ with graph $G = (V,E)$ is an $n$-dimensional vector space spanned by the following Hermitian operators:
\begin{equation}\label{eqn:nullifers}
H_i = \hat{p}_i - \sum_{j \in N(i) } \hat{q}_j \qquad i = 1,\ldots,n.
\end{equation}
That is, any linear superposition $H = \sum_i c_i H_i$ satisfies $H \ket{\phi} = 0$. Note that $[H_i,H_j] = 0$ for all $(i,j)$.
\end{thrm}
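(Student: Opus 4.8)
The plan is to start from the definition of the graph state as $\ket{\phi} = C_Z[G]\, \ket{0}_p^{\otimes n}$, where $C_Z[G] = \prod_{(i,j)\in E} e^{i\hat{q}_i\hat{q}_j}$, and to push the obvious nullifiers of the unentangled state $\ket{0}_p^{\otimes n}$ through this Gaussian unitary in the Heisenberg picture. Since $\hat{p}_i \ket{0}_p = 0$ for each $i$, the operators $\hat{p}_i$ span an $n$-dimensional nullifier space of $\ket{0}_p^{\otimes n}$; conjugating, $C_Z[G]\,\hat{p}_i\, C_Z[G]^\dag$ must annihilate $\ket{\phi}$. So the first step is a short computation: using $[\hat{q}_k,\hat{p}_l]=i\delta_{kl}$ and the stated relation $e^{i\hat{q}_1\hat{q}_2}\hat{p}_1 e^{-i\hat{q}_1\hat{q}_2} = \hat{p}_1 - \hat{q}_2$, one finds $C_Z[G]\,\hat{p}_i\, C_Z[G]^\dag = \hat{p}_i - \sum_{j\in N(i)}\hat{q}_j = H_i$. (Only the factors $e^{i\hat{q}_i\hat{q}_j}$ with $j\in N(i)$ act nontrivially, and they commute among themselves, so the conjugations simply compose.) This immediately shows each $H_i$ is a nullifier, and that $[H_i,H_j]=0$ follows because the $\hat{p}_i$ mutually commute and conjugation by a fixed unitary preserves commutators. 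Equivalently, one can differentiate the stabilizer identity $K_i(s)\ket{\phi}=\ket{\phi}$ from Eq.~\eqref{eqn:cont_stab} at $s=0$ to read off $H_i$ directly.

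The next step is to argue that the $H_i$ are linearly independent, so they span an $n$-dimensional subspace of the nullifier space. This is immediate from the form $H_i = \hat{p}_i - \sum_{j\in N(i)}\hat{q}_j$: writing a general combination $\sum_i c_i H_i = \sum_i c_i \hat{p}_i - \sum_j\big(\sum_{i\in N(j)} c_i\big)\hat{q}_j$ and using that $\{\hat{q}_j,\hat{p}_i\}$ are linearly independent as operators, $\sum_i c_i H_i = 0$ forces every $c_i=0$ from the $\hat{p}_i$ coefficients alone. That a general superposition $H=\sum_i c_i H_i$ annihilates $\ket{\phi}$ is then trivial by linearity.

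The only genuinely substantive point is the \emph{upper} bound: that the nullifier space has dimension at most $n$, i.e.\ the $H_i$ span it completely rather than being merely a spanning set of a larger space. The cleanest route is to restrict attention to the relevant class of nullifiers — here, elements of the Lie algebra of the (Gaussian) stabilizer group, which are linear in the quadratures, i.e.\ operators of the form $H = \hat{\mathbf{v}}^T \mathbf{a}$ for $\mathbf{a}\in\reals^{2n}$. Again conjugate back through $C_Z[G]$: $H\ket{\phi}=0$ iff $(C_Z[G]^\dag H\, C_Z[G])\,\ket{0}_p^{\otimes n}=0$, and $C_Z[G]^\dag H\, C_Z[G]$ is again a linear combination of quadratures (the $C_Z$ gates act linearly in the Heisenberg picture via the symplectic matrix built from the adjacency matrix of $G$). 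So it suffices to show that the space of linear-in-quadrature nullifiers of $\ket{0}_p^{\otimes n}$ is exactly $\mathrm{span}\{\hat{p}_1,\dots,\hat{p}_n\}$, which is elementary: a combination $\sum_i \alpha_i \hat{q}_i + \sum_i \beta_i \hat{p}_i$ annihilates $\ket{0}_p^{\otimes n}$ only if all $\alpha_i=0$ (since $\hat{q}_i$ shifts momentum and $\ip{\mathbf{0}_p}{\hat{q}_i\,\mathbf{0}_p}$-type arguments, or directly acting on the state in the $p$-representation, forces it), giving dimension exactly $n$.

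I expect the main obstacle to be stating precisely \emph{which} nullifier space is meant — the theorem implicitly restricts to the Lie algebra of the Gaussian stabilizer group (operators linear in $\hat q,\hat p$), since an idealized graph state also has, e.g., nonlinear operators like $H_i H_j$ or functions thereof annihilating it, and without that restriction ``$n$-dimensional'' is false. Once the class of admissible operators is pinned down to $\mathrm{span}\{\hat q_k,\hat p_k\}$, everything else is the short Heisenberg-picture conjugation computation above, and the commutativity $[H_i,H_j]=0$ is a one-line consequence.
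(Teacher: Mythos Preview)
Your proof is correct and essentially matches the paper's approach: the paper's one-line proof simply observes that $K_i(s) = e^{-isH_i}$, which is exactly your alternative ``differentiate the stabilizer identity at $s=0$'' remark, while your primary Heisenberg-conjugation argument is the Lie-algebra version of the same computation the paper already used (just before the theorem) to derive the $K_i(s)$ themselves. Your treatment is in fact more complete than the paper's, which does not explicitly verify linear independence, the dimension upper bound, or commutativity, nor does it address the subtlety you correctly flag about restricting the nullifier space to operators linear in the quadratures.
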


\begin{proof}
Every stabilizer from Eq.~\eqref{eqn:cont_stab} is the exponential of a nullifier in this space.  Specifically, $K_i(s) = e^{-isH_i}$ for all~$s \in \reals$, with $i = 1, \dotsc, n$.  $\blacksquare$
\end{proof}

\begin{figure}[tb]
\center
\includegraphics[width=\columnwidth]{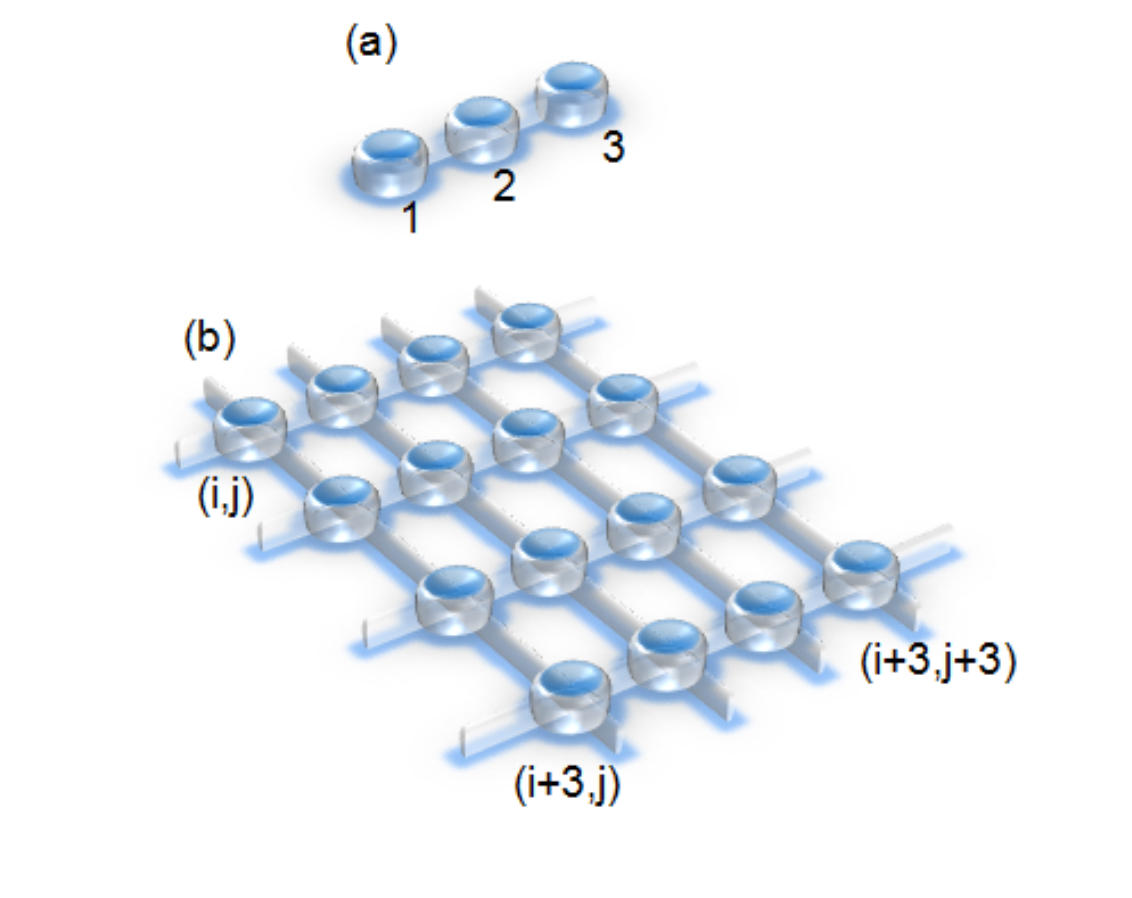}
\caption{\label{fig:graph_states} Nullifiers give an efficient description of ideal graph states. The nullifier space of the linear graph state on three nodes (a) is spanned by $\hat{p}_1 - \hat{q}_2$, $\hat{p}_2 - \hat{q}_1 - \hat{q}_3$ and $\hat{p}_3 - \hat{q}_2$. The infinite two-dimensional lattice (b) is nullified by $H_{i,j} = \hat{p}_{i,j} - \hat{q}_{i-1,j} - \hat{q}_{i+1,j} - \hat{q}_{i,j-1} - \hat{q}_{i,j+1}$ for each coordinate $(i,j)$.}
\end{figure}

In Fig.~\ref{fig:graph_states}, we illustrate this formalism. Note that the nullifier space for a given state does not have a unique set of nullifiers (a basis) that defines it, since linear combinations of nullifiers give another nullifier.  Nevertheless, Eq.~\eqref{eqn:nullifers} is a standard set that can easily be obtained from a given graph.

\subsection{Wigner Representation}

The Wigner function can be useful as an extension to the nullifier formalism. It encapsulates the simplicity of the nullifier formalism, while maintaining the intuition afforded by an explicit representation of the state, and importantly, continues to be useful for non-ideal CV cluster states. Since the arguments of a Wigner function behave identically to the nullifiers under Gaussian transformations, they may also be written by inspection.

The Wigner function of an ideal $n$-qumode graph state $\ket{\phi}$ with graph $G = (V,E)$ is a function of $2n$ variables on the scalar-valued vectors $\mathbf{q} = (q_1,\ldots,q_n)$ and $\mathbf{p} = (p_1,\ldots,p_n)$. Explicitly,
\begin{equation}\label{eqn:wigner}
W(\mathbf{q},\mathbf{p}) = \prod_{i = 1}^n \epsilon(q_i) \delta (H_i)
\end{equation}
where $H_i$, $i = 1,\ldots,n$ are the standard nullifiers of $\ket{\phi}$ (with each of the operators $\hat{q}_i$ and $\hat{p}_i$ replaced by scalar variables $q_i$ and $p_i$, respectively), $\delta(x)$ is the Dirac delta-distribution, and $\epsilon(x)$~is the infinite uniform distribution.  Ideal CV graph states are highly singular, so for all practical purposes, $\delta(x)$ and $\epsilon(x)$ should be considered limits of a normalized Gaussian whose variance, respectively, vanishes and extends to infinity.  For example, the Wigner function of Fig.~\ref{fig:graph_states}(a) is $\epsilon(q_1) \epsilon(q_2) \epsilon(q_3) \delta(p_1 - q_2)\delta(p_2 - q_1 - q_3)\delta(p_3 - q_2)$.

Wigner functions can also be used to define an extended class of generalized graph states. Whereas an ideal graph state with associated graph $G$ may be defined by the action of appropriate $\CZ$ gates on $n$ momentum eigenstates, a generalized graph state replaces each momentum eigenstate with some arbitrary quantum state $\ket{\phi_i}$. If $\ket{\phi_i}$ has a corresponding Wigner function $W_i(q_i,p_i)$, then the Wigner representation of the resulting generalized graph state is given by
\begin{equation}\label{eqn:general_cluster}
W(\mathbf{q},\mathbf{p}) = \prod_{i = 1}^n W_i(q_i,H_i).
\end{equation}
Such states are used extensively when we perform computations with graph states and when we extend the graph state formalism to realistic situations where momentum eigenstates need to be approximated.

\section{Quantum Computation on CV Graph States}

CV graph states are a resource for CV quantum computation. For any given CV unitary $U$, and any given input $\ket{\phi}$, there exists an appropriate graph state such that by entangling the graph state locally with $\ket{\phi}$ and applying an appropriate sequence of single-qumode measurements, $U \ket{\phi}$ is computed.

To justify this statement, we first show that there exists a $\ket{\phi}$-dependent quantum state on a system of qumodes that collapses into $U \ket{\phi}$ (modulo known single qumode Gaussian operations) when an appropriate sequence of single-qumode measurements is applied. We then demonstrate that this $\ket{\phi}$-dependent quantum state can be efficiently constructed using an appropriate graph state as a resource.

\subsection{Measurement-Based CV Quantum Computation}

To implement any unitary operation on $k$ qumodes, we apply the following algorithm. We first introduce a graph $G = (V,E)$. We designate $k$ vertices of $G$ as input vertices, and another (possibly overlapping) set as output. Call these sets $V_{\text{in}}$ and $V_{\text{out}}$. The following algorithm computes $U \ket{\phi}$:
\begin{enumerate}
\item The qumodes corresponding to the vertices in~$V_{\text{in}}$ encode the input state $\ket{\phi}$, while the qumodes corresponding to the other vertices are each initialized in the state $\ket{0}_p$.
\item For each edge $(v_j,v_k) \in E$, apply $\CZ = e^{i\hat{q}_j\hat{q}_k}$ between vertices $j$ and $k$. Since all $\CZ$ operations commute, their order does not matter.
\item Measure each vertex $v_i$, for all $v_i \not\in V_{\text{out}}$ in a basis of the form~$M_i = e^{-if_i(\hat{q})}\hat{p}e^{if_i(\hat{q})}$, where $f_i(\hat{q})$ is, in general, a polynomial of $\hat{q}$. The exact form of each~$f_i$ is dictated by the unitary we wish to implement and the result of measurements on prior modes. Without loss of generality, we can label the vertices such that they are measured in numerical order.
\item The remaining unmeasured qumodes encode $U \ket{\phi}$, modulo known single-mode rotations and translations.
\end{enumerate}
The above algorithm may be implemented by using an appropriate graph state as a resource. This algorithm is universal. Given any unitary $U$, there always exists an appropriate graph $G = (V,E)$ and designations $V_{\text{in}}, V_{\text{out}} \subseteq V$ such that the above algorithm implements $U$.

\subsection{Proof of Universality}\label{sec:m_comp}

To prove the above procedure is universal, we need to show that it can implement (a) single-mode Gaussian operations, (b) the cubic phase gate, and  (c) the $\CZ$ gate. First observe that (c) may be implemented trivially by a two vertex graph where both vertices are designated as both input and output. No measurements are required.

The implementation of (a) and (b) also each involve a two-vertex graph. We designate one vertex as input and the other as output. Consider first the case where the input mode is measured in the $\hat{p}$ basis:
\[
\Qcircuit @C=1em @R=1em {
    \lstick{\ket{\phi}} & \ctrl{1} &  \qw & \measureD{\hat{p}} & \rstick{m} \cw
     \\
    \lstick{\ket{0}_p} & \ctrl{0} & \qw & \rstick{X(m) F \ket{\phi}} \qw
}
\]
Given an input $\ket{\phi} \otimes \ket{0}_p = \int ds\, f(s) \ket{s}_q \ket{0}_p$, the state of the system after the application of the $\CZ$ gate is
\begin{align}
	\CZ \bigl(\ket{\phi} \otimes \ket{0}_p\bigr) = \int ds\, f(s) \ket{s}_q \ket{s}_p.
\end{align}
Measurement of $\hat{p}$ on the first mode with associated result $m$, as shown, collapses this state to
\begin{align} \nonumber
\ket{\phi}_\mathrm{out} &\propto \int ds\, f(s) \Bigl(\bra{m}_p \ket{s}_q \Bigr) \ket{s}_p \\
&\propto \int ds\, f(s) e^{-ism} \ket{s}_p, \nonumber \\
\ket{\phi}_\mathrm{out} & = e^{-im \hat{p}} \int ds\, f(s) \ket{s}_p = X(m) F \ket{\phi}.
\end{align}
The effect of this circuit is to apply the identity operation, modulo a known quadrature rotation and displacement. Obviously, a transformed input state of $D_{\hat{q}} \ket{\phi}$, for any $D_{\hat{q}} = e^{if(\hat{q})}$ diagonal in the computational basis, would result in output $X(m) F D_{\hat{q}} \ket{\phi}$. However, this same transformation can be induced by an appropriate measurement. We can see this immediately by writing out the associated circuit.  Since $D_{\hat{q}}$ commutes with $\CZ$, the circuit
\[
\Qcircuit @C=1em @R=1em {
    \lstick{\ket{\phi}} & \ctrl{1} &  \gate{D_{\hat{q}}} & \measureD{\hat{p}} & \rstick{m} \cw
     \\
    \lstick{\ket{0}_p} & \ctrl{0} & \qw & \rstick{X(m) F D_{\hat{q}} \ket{\phi}} \qw
}
\]
must have the desired output. The application of $D_{\hat{q}}$ followed by a $\hat{p}$-measurement has an identical effect to a measurement in the rotated basis $\hat{p}_{f(\hat{q})} = D^\dag_{\hat{q}} \hat{p} D_{\hat{q}}$. Hence, any unitary diagonal in the computational basis can be implemented by a single measurement of $\hat{p}_{f(\hat{q})}$. Measurements on two qumodes of a three-qumode cluster equates to a repeated application of this circuit, resulting in the output
\begin{align}
\nonumber
\ket{\phi}_{\text{out}} &= X(m_2) F D_{\hat{q}} X(m_1) F D_{\hat{q}} \ket{\phi}\\ \nonumber
&=   X(m_2) F X(m_1) D_{(\hat{q}+m_1)} F D_{\hat{q}} \ket{\phi}\\
&=   X(m_2) F X(m_1) F D_{(-\hat{p}+m_1)} D_{\hat{q}} \ket{\phi}, \label{eqn:single_circuit}
\end{align}
where $D_{\hat v} = e^{if(\hat{v})}$ for any operator~$\hat{v}$, and we have used the relations
\begin{align}
	X(-m) \hat{q} X(m) &= \hat{q} + m, \\
	Z(-m) \hat{p} Z(m) &= \hat{p} + m, \\
	F^\dag (-\hat{q}) F &= \hat{p}, \\
	F^\dag \hat{p} F &= \hat{q},
\end{align}
the last two of which give
\begin{align}
	F^\dag Z(m) F &= X(m), \\
	F^\dag X(m) F &= Z(-m).
\end{align}
If, instead of $\hat{p}_{f(\hat{q})}$ as shown above, we had measured the second mode in the outcome-dependent basis~$\hat{p}_{f(-\hat{q} - m_1)}$, the output would instead be $\ket{\phi}_{\text{out}} = X(m_2) F X(m_1) F D_{\hat{p}} D_{\hat{q}} \ket{\phi}$.  Thus, the ability to measure the second mode in the basis $\hat{p}_{f(-\hat{q} - m_1)}$ allows deterministic implementation of $D_{\hat{p}}$.

By concatenating this circuit a sufficient number of times, any single-mode operation can be implemented deterministically by alternating application of $D_{\hat{q}}$ and $D_{\hat{p}}$ (with generally different $D$s each time)~\cite{Lloyd1999}.  Note that the measurements required to implement these gates (beyond the first one) are necessarily adaptive---that is, our choice of the measurement basis is generally dependent on previous measurement results. Also notice that the final result is modified by a measurement-dependent Gaussian operation ($X(m_2) F X(m_1) F$, in the simple case illustrated).  This need not be corrected.  As long as we keep track of it, it can instead be considered as just a change of basis for the final state.

Another useful way of approaching the question of universality in the CV context is to consider implementing Gaussian operations and then, separately, non-Gaussian operations.  Single-mode Gaussian operations require the ability to implement $e^{is\hat{q}}$ (quadrature displacements) and $e^{is\hat{q}/2}$ (shears) for all $s \in \reals$, plus the Fourier transform~$F$.

The Fourier transform is obtained for free simply through the Gaussian correction applied with each measurement.  Displacements~$e^{is\hat{q}}$ are easily implemented, as well: just measure $\hat{p}_{s \hat{q}} = \hat{p} + s$, which is the same as measuring~$\hat{p}$ and adding~$s$ to the result~\footnote{There is a sign error in the corresponding derivation in Ref.~\cite{Menicucci2006}.}.  In this case, dependence on previous measurement outcomes is trivial since~$\hat{p}_{s (\hat{q} + m)}$ is also equal to~$\hat{p} + s$, and thus, no adaptation is required at all.

Shearing transformations~$e^{is\hat{q}/2}$ correspond to measurements in the basis~$\hat{p}_{s\hat{q}^2/2} = \hat{p} + s\hat{q}$.  In the case that adaptation for previous measurements is required, the new measurement basis would be of the form~$\hat{p}_{s(\hat{q} + m)^2/2} = \hat{p} + s\hat{q} + ms$, which differs from the original basis only by a measurement-dependent constant.  This can be accounted for trivially by measuring in the original basis and adding~$ms$ to the result.

Since the ``adaptation'' required for previous measurement outcomes is trivial for all measurements necessary to implement Gaussian operations, such measurements may be made in any order---or simultaneously.  This property is known as \emph{parallelism}~\cite{Menicucci2006}.  Qubit cluster-state computation has an analogous property with the same name, whereby measurements that implement Clifford group operations can be performed in any order~\cite{Raussendorf2003}.

The above measurements allow for any Gaussian operation to be implemented.  But universality requires non-Gaussian operations as well~\cite{Bartlett2002}.  The cubic phase gate $e^{is\hat{q}^3/3}$ allows implementation of all single-mode non-Gaussian operations~\cite{Lloyd1999} and may be implemented via a measurement in the basis~$\hat{p}_{s\hat{q}^3/3} = \hat p + s\hat{q}^2$.  The difference with this gate is that when an adaptive implementation is required, the physical basis is now different: $\hat{p}_{s(\hat{q}+m)^3/3} = \hat{p} + s\hat{q}^2 + \tfrac 2 3 ms\hat{q} + \tfrac 1 3 m^2 s$, due to the presence of the noncommuting $m$-dependent term~$\tfrac 2 3 ms\hat{q}$.  Accounting for this difference requires physically changing the basis of measurement (unlike the last term~$\tfrac 1 3 m^2 s$, which can be eliminated simply by shifting the result).  As with qubit quantum computation, a general CV quantum computation will require adaptive measurements for the non-Gaussian (non-Clifford) part of the computation. What these measurements are in an experimental context will depend on the choice of the physical implementation. In section VII, we propose one possible method that uses photon counting.

A sequence of single-qumode unitaries and wires is implemented by a sequence of measurements on a linear graph state. $\CZ$ gates are implemented by edges between linear clusters (Fig.~\ref{fig:circuit_mapping}). Thus, we may apply the algorithm described to implement any given CV unitary on an arbitrary input state.  This proves universality.

\begin{figure}[tb]
\centering
\includegraphics[width=\columnwidth]{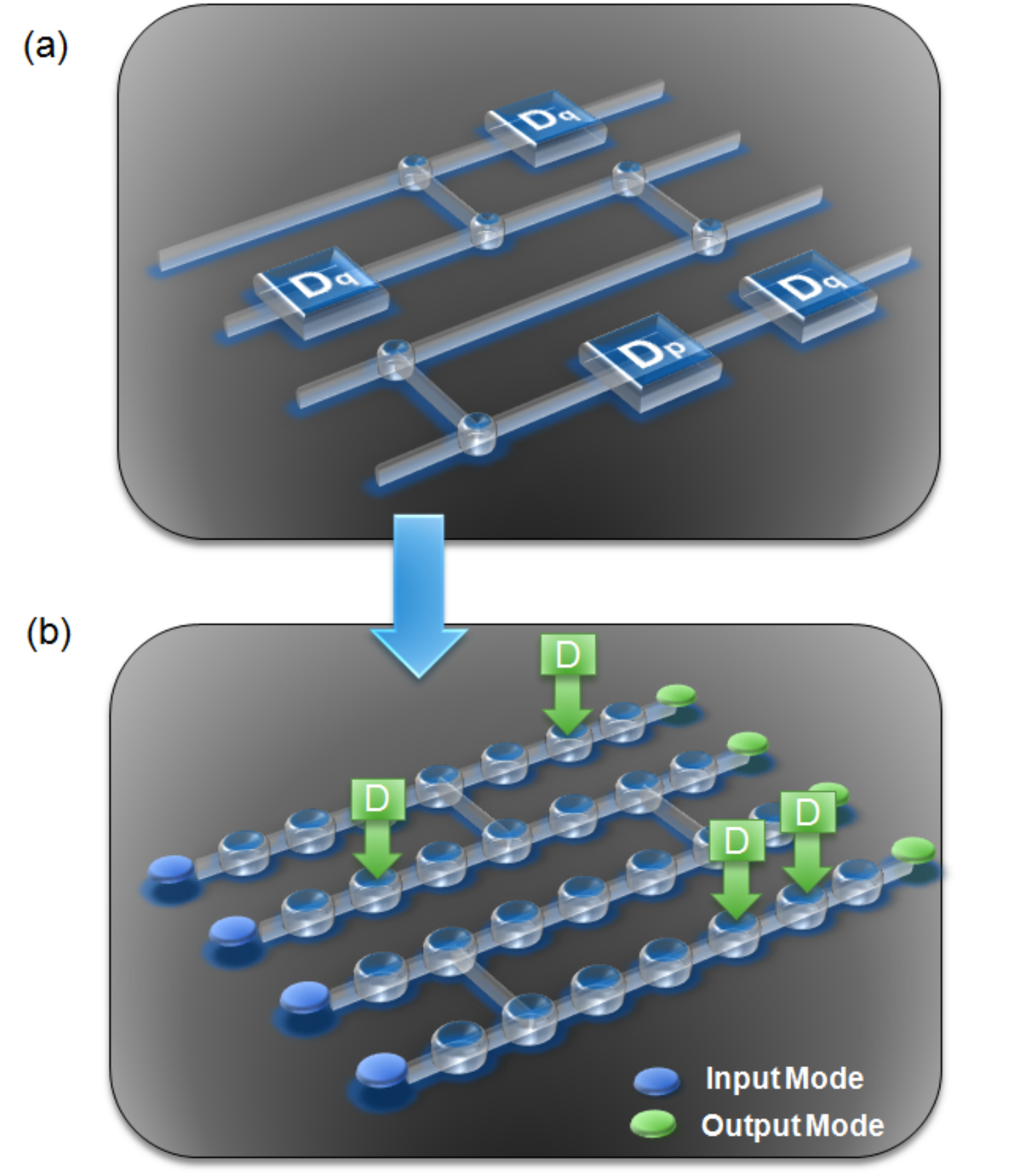}
\caption{\label{fig:circuit_mapping} (a)~Any unitary on multiple qumodes may be written as a quantum circuit consisting of $\CZ$ and single-qumode unitaries diagonal in either the position or momentum basis~\cite{Lloyd1999}. (b)~Any such circuit may be directly translated into an appropriate graph state. Here the arrowed qumodes are measured in the appropriate basis that implements their corresponding single-qumode unitary. All other non-output qumodes are measured in the $\hat{p}$ basis. }
\end{figure}

\subsection{Graph States as Resources}

Observe that steps~1 and~2 generate a special class of generalized CV graph states. Each of the input qumodes are initially set to encode the inputs of the desired quantum computation, rather than the standard momentum eigenstates. The resulting cluster has the Wigner representation
\begin{equation}\label{eqn:inputwigner}
W(\mathbf{q},\mathbf{p}) = \prod_{i \in \mathcal{I}} W_i(q_i,H_i) \prod_{j \not\in \mathcal{I}} \epsilon(q_j) \delta(H_j),
\end{equation}
where $\mathcal{I} = \{ i \mid v_i \in V_\mathrm{in} \}$ is the set of indices that corresponds to the input qumodes.

To see that the standard graph state with graph $G(V,E)$ may be used as a resource for the algorithm, we show that it may be used to efficiently generate clusters of the form  given by Eq.~(\ref{eqn:inputwigner}). Let the input state be initially encoded on $k$ qumodes, which we label $\{u_1,u_2,\ldots,u_k\}$, and the input vertices of the graph state given by $V_{\text{in}} = \{v_1,v_2,\ldots, v_k\}$. We proceed as follows:
\begin{enumerate}
\item Apply a $\CZ$ operation between each qumode pair, $(u_i,v_i)$, for each $i = 1,\ldots k$
\item Measure each of the modes $u_i$, resulting in measurement results $m_i$, $i = 1,\ldots,k$.
\end{enumerate}
The resulting cluster state is identical to Eq.~(\ref{eqn:inputwigner}), modulo known single-mode quadrature displacements and rotations on each $v_i$ that can be accounted for throughout the remainder of the computation.

Thus, the circuit of Fig.~\ref{fig:circuit_mapping}(a) may be implemented by using a standard graph state with the graph shown in Fig.~\ref{fig:circuit_mapping}(b).  We refer to such graph states as \emph{CV brickwork states}, alluding to similar results in qubit cluster-state computation~\cite{Broadbent2008}.

\section{Universal Cluster States}

So far, we have discussed the construction of specific graph states tailored for a specific quantum computation. Like qubit clusters, there exist classes of universal CV graph states that may be used to implement an arbitrary CV operation. This is of more than theoretical interest, since it facilitates the development of physical systems that are tailored to generate one  particular state. This state can then be used as a universal resource.

To prove the existence of such universal resources, we explore how CV graph states transform under single-mode quadrature measurements. These tools are then applied to show that there exists a universal CV graph state, which, when appropriate quadrature measurements are applied, collapses to the specific CV brickwork state that implements any given quantum circuit. Such universal graph states are called CV cluster states.

\subsection{Graph State Transformations}

The nullifier formalism is ideal for computing how graph states transform through quadrature measurements. In this formalism, we describe a measurement $\hat{p}_i$ on the $i^{th}$ qumode, with measurement result $m_i$, as follows. We first rewrite the nullifiers in a basis such that only one element, say $H_i$, does not commute with our basis of measurement. $H_i$ is then replaced with $\hat{p}_i - m_i$, and in all other nullifiers, $\hat{p}_i$ is replaced with $m_i$.  Measurements in the $\hat{q}_i$ basis are treated analogously. The details of this formalism are outlined in Appendix \ref{sec:nullifier}.

\subsubsection{Vertex Removal}

A computational-basis measurement on a qumode removes it, along with all edges that connect it, from the cluster. Consider a measurement $\hat{q}_i$ on the $i^{th}$ mode of $\ket{\phi}$. Equation (\ref{eqn:nullifers}) indicates that $H_i$ is the only noncommuting basis element. Therefore a measurement with result $m_i$ transforms $H_i$ into $\hat{q}_i - m_i$ and replaces $\hat{q}_i$ with~$m_i$ in all others.  Explicitly, for each~$j \neq i$,
\begin{align}
	H_j &\to H_j \Bigr\rvert_{\hat{q}_i \to m_i} \nonumber \\
	&=
	\begin{cases}
		\hat{p}_j - \sum_{(v_j,v_k) \in E, k\neq i} \hat{q}_k - m_i, & \text{if $(v_j,v_i) \in E$}, \\
		\hat{p}_j - \sum_{(v_j,v_k) \in E} \hat{q}_k, & \text{if $(v_j,v_i) \not\in E$}.
	\end{cases}
\end{align}
The resulting state corresponds to the graph state of $G$ with vertex $v_i$ removed, modulo known quadrature displacements. This operation is useful for creating a CV graph state that corresponds to the subgraph of some original resource state. In addition, it functions as a handy ``delete'' button, and can be used to ``amputate'' corrupted parts of a cluster state.
To summarize, if $\ket{\phi}$ is the graph state defined by a graph $G = (V,E)$, a $\hat{q}$-measurement on a mode $i$ results in the graph state with associated graph $G \backslash \{v_i\}$, modulo known corrections, i.e., the graph resulting from removal of vertex $v_i$ along with all edges connecting to $v_i$. Thus, a computational measurement removes a given node from the cluster.

\subsubsection{Wire Shortening}

Sometimes we may also wish to remove a given vertex but still preserve the connections of its neighbors. This transformation is useful, for example, when we wish to shorten linear graph states. Consider $\hat{p}$-measurements on the two inner nodes of the four-node linear graph, which has nullifier basis
\begin{equation}
\{\hat{p}_1 - \hat{q}_2,\hat{p}_2 - \hat{q}_1 - \hat{q}_3,\hat{p}_3 - \hat{q}_2 - \hat{q}_4,\hat{p}_4 - \hat{q}_3\}.
\end{equation}
Since we will be making measurements of~$\hat{p}_2$ and~$\hat{p}_3$, we want a new basis in which only one nullifier fails to commute with~$\hat{p}_2$ and only one other fails to commute with~$\hat{p}_3$. We construct this basis out of linear combinations of the original basis elements, resulting in
\begin{equation}
\{\hat{p}_1 - \hat{q}_2,\hat{p}_2 - \hat{p}_4 - \hat{q}_1,- \hat{p}_1 + \hat{p}_3 - \hat{q}_4,\hat{p}_4 - \hat{q}_3\}
\end{equation}
Measurements of $\hat{p}_2$ and $\hat{p}_3$, with outcomes $m_2$ and $m_3$, respectively, collapse the cluster into a new graph state with nullifiers $\{m_2 - \hat{p}_4 - \hat{q}_1, m_3 -\hat{p}_1 - \hat{q}_4\}$. This is equivalent to the graph state of the two-qumode cluster, modulo a known quadrature displacement and a reflection in phase space about one of the nodes. Thus, measurements in the momentum basis allow us to effectively ``shorten'' linear graph states.

\begin{figure}[tb]
\centering
\includegraphics[width=\columnwidth]{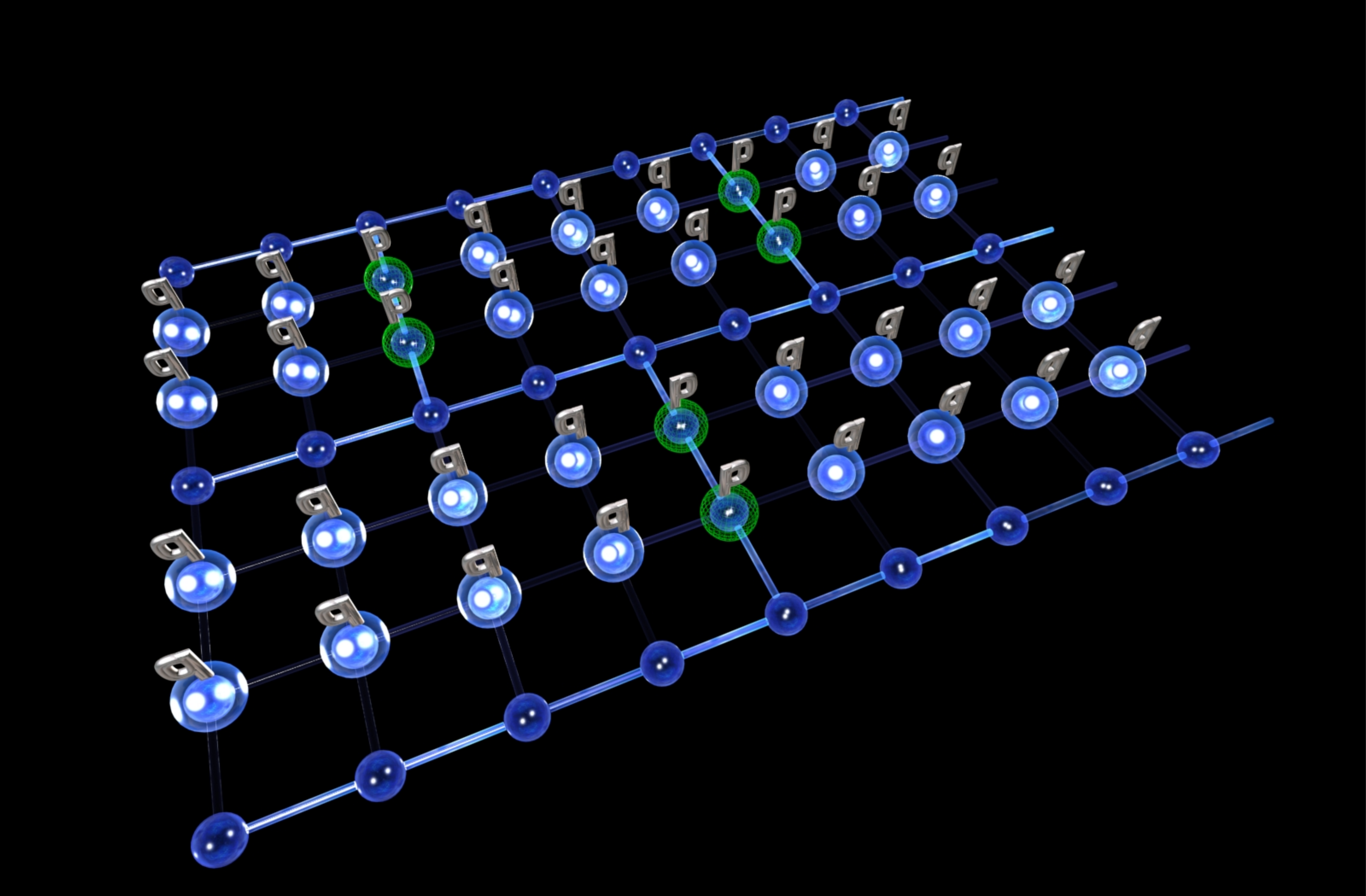}
\caption{\label{fig:universal_cluster} Any CV graph state may be generated by appropriate single-mode measurements. Computational-basis measurements (blue orbs) remove unwanted nodes. Momentum-basis measurements (green orbs) are then employed to shorten the ``wires'' within the cluster.}
\end{figure}

\subsection{The Universal Resource State}

Recall that any quantum circuit may be implemented by a sequence of measurements on a specifically tailored brickwork state (See Fig.~\ref{fig:circuit_mapping}(b)). The graph for such states is always a subgraph of a sufficiently large two-dimensional square lattice (see Fig.~\ref{fig:graph_states}(b)). The two transformations outlined above allow us to carve out an appropriate graph state for simulating any given circuit (Fig.~\ref{fig:universal_cluster}). The graph state that corresponds to a planar square lattice is thus a resource for universal CV quantum computation and is therefore a CV cluster state.

In practice, of course, lattices are always of finite size, just as are all quantum circuits. Therefore the complexity of the quantum computation one wishes to perform is constrained by the size of the original resource state. Since the size of the required cluster grows linearly with the number of fundamental one- and two-qumode gates and also grows linearly with the number of qumodes, CV cluster-state computation is efficient. As in the case of qubits, any algorithm of polynomial gate complexity can also be implemented by a resource state of polynomial size.
%

\section{The Effects of Finite Squeezing}

The ideal framework of CV quantum computation involves the use of momentum eigenstates. Such states cannot be normalized and are thus an idealized abstraction.  Any practical implementation must necessarily approximate these states.  One way to do so is by replacing each zero-momentum eigenstate with a vacuum state that has been finitely squeezed in the momentum quadrature.  In this section, we detail the resulting distortions imposed on any quantum computation that uses cluster states made from these approximate states.

Suppose we use states of finite squeezing, i.e.,\ $S(s)\ket{0}$ (where $\ket{0}$ represents the vacuum) for some large $s$, in place of momentum-quadrature eigenstates.  The resulting graph state obtained will not be ideal. Formally, we say that the resulting graph state $\ket{\phi(s)}$ is of accuracy~$s$. Such states are generalized CV cluster states, and Eq.~\eqref{eqn:general_cluster} allows us to write down their Wigner representation:
\begin{equation}
W(\mathbf{q},\mathbf{p}) = \prod_{j=1}^n G_{s}(q_j)G_{1/s}(H_j),
\end{equation}
where $G_{s}(q) = (\pi s^2)^{-1/2} \exp(-q^2/s^2)$ represents a Gaussian distribution with variance $s^2/2$, and $\{H_j\}$ are the nullifiers of $\ket{\phi(s)}$ from Eq.~\eqref{eqn:nullifers}.  Thus, these generalized CV cluster states are Gaussian states.  Observe that $G_{s}$ converges to a uniform distribution and $G_{1/s}$ to a $\delta$-peaked distribution in the limit of large $s$, in agreement with the Wigner function for an ideal graph state, Eq.~\eqref{eqn:wigner}.

To analyze the resulting distortions, we first consider the special case of simple state teleportation,
where only $\hat{p}$-measurements are made and the input state propagates through the cluster without any intended
manipulation. This result may then be extended to arbitrary measurements and the implementation of universal gates.

\subsection{Distortions in State Propagation}
 Consider the state resulting from a  $\hat{p}$-measurement on a $2$-qumode cluster, with input state $\ket{\phi}$ specified by the Wigner function~$W_{\text{in}}(q,p)$. In terms of the circuit model, this is represented by
\begin{equation}
\label{circ:distortedteleport}
\Qcircuit @C=1em @R=1em {
    \lstick{\ket{\phi}} & \ctrl{1} &  \qw & \measureD{\hat{p}} & \rstick{m} \cw
     \\
    \lstick{S(s)\ket{0}} & \ctrl{0} & \qw & \rstick{X(m) F \ket{\phi}'} \qw
}
\end{equation}
where $\ket{\phi}'$ is a distorted version of $\ket{\phi}$, which will be specified below.  We can analyze this circuit as follows.  After entangling the input, the state of the system is given by
\begin{equation}
W_{\text{in}}(q_1,p_1 - q_2)G_{s}(q_2)G_{1/s}(p_2 - q_1)
\end{equation}
A $\hat{p}_1$-measurement with outcome~$m$ yields the output state
\begin{align} \nonumber
P(m) W_{\text{out}}(q,p) &= G_{s}(q) \int d\tau\, W_{\text{in}}(\tau,m - q)G_{1/s}(p - \tau) , \\
&= G_s(q) \left[(W_{\text{in}} \ast_1 G_{1/s})(p,m-q) \right],
\end{align}
where $\ast_1$ denotes a convolution with respect to the first argument of $W$, and $P(m)$~is the probability of measurement outcome~$m$.  $P(m)$ multiplies the resulting, normalized pure state~$W_{\text{out}}$ to give the actual expression on the right-hand side.

What we would like to know from this toy example is how the imperfect squeezing affects the encoded state under the cluster-state implementation of the identity gate.  A good way to see this effect is to undo the unitary correction~$X(m)F$ and compare the result~$W_{\text{in}}'$ to the original input state~$W_{\text{in}}$:
\begin{align}
	P(m) W_{\text{in}}' = G_s(m-p) \left[(W_{\text{in}} \ast_1 G_{1/s})(q,p) \right].  \label{eqn:distortion}
\end{align}
The Wigner function~$W_{\text{in}}'(q,p)$ corresponds to $\ket{\phi}'$ in Circuit~\eqref{circ:distortedteleport}.  This means that with respect to the quantum information to be teleported, the Gaussian envelope is dependent on the measurement outcome~$m$.  Some values of~$m$ will result in an envelope that overlaps the (nonnegligible) support of~$W_{\text{in}}$, while other more extreme values of~$m$ will result in a strongly shifted envelope that cuts off large portions of the support of~$W_{\text{in}}$.  Thus, the actual success of any instance of teleportation depends strongly on the measurement outcome~$m$.

On the other hand, we can instead talk about the average state (a mixed state) that results from teleportation when we average over all possible measurement results~$m$ using their corresponding probabilities~$P(m)$.  This state is easily calculated using Eq.~\eqref{eqn:distortion}:
\begin{align}
	W_{\text{avg}} = \avg{W_{\text{in}}'} = \int dm\, P(m) W_{\text{in}}' = W_{\text{in}} \ast_1 G_{1/s}.
\end{align}
Thus, the average effect on the quantum information due to teleportation using finitely-squeezed resources is just the addition of
a variance of $1/(2 s^2)$ noise units on the $\hat{q}$-quadrature.  %
%
%
%
%
Repeated application gives us the resulting average distortion when a chain of $\hat{p}$-measurements is used to teleport an initial state $W_{\text{in}}$ down a linear cluster:
\begin{equation}
W_{\text{avg}} =  W_{\text{in}} \ast_1 G_{1/s} \ast_2 G_{1/s} \ast_1 G_{1/s} \ast_2 \dotsm
\end{equation}
In summary, when propagating quantum information through a chain of finite accuracy $s$, in every single shot,
pure, conditional output states are created with Gaussian envelopes applied to the input state in alternating quadratures and with the measurement results~$\{m_i\}$ determining their respective centers. More typical, when CV quantum information is teleported through a chain of
finite accuracy $s$, on average, $1/(2 s^2)$ units of noise are added alternately between the two quadratures resulting, in general, in a mixed output state. Whether a single-shot or an average picture is applicable depends on the actual experimental implementation and the encoding of the signal states.

\subsection{Distortions in Universal Gate Teleportation}

The distortions derived above, caused by finite squeezing, apply to all single-qumode measurements. To see this, consider the application of an arbitrary single-qumode unitary~$D$, diagonal in the computational basis, by measuring in the $D^\dag \hat{p} D$ basis. Since $D$ and $\CZ$ commute, this is equivalent to standard teleportation with input $D \ket{\phi}$, i.e.,
\[
\Qcircuit @C=1em @R=1em {
    \lstick{D\ket{\phi}} & \ctrl{1} &  \qw & \measureD{\hat{p}} & \rstick{m} \cw
     \\
    \lstick{S(s)\ket{0}} & \ctrl{0} & \qw & \rstick{X(m) F (D \ket{\phi})'} \qw
}
\]
Thus, the resulting output state is again the expected output state in the limit of ideal graph states, subjected to the distortion given by Eq.~\eqref{eqn:distortion}.  Therefore, the use of finite squeezing results universally in the addition of Gaussian noise that ``blurs out'' the details in the momentum and position quadratures, alternating between them at each step. The magnitude of this noise is inversely proportional to the accuracy of the cluster and grows linearly with the length of the cluster. This noise can potentially be reduced by the use of redundant rails~\cite{vanLoock2007}. However, such redundant, multiple-rail encoding requires a larger amount of squeezing resources for creating
the corresponding graph state. We will get back to this point in the following sections on optical cluster-state generation and computation.

\section{Optical Cluster States}

The optical implementation of CV cluster states holds particular promise and features a number of advantages over its discrete-variable counterpart~\cite{Nielsen2004,Browne2005}. With optical qubit cluster states, the entangling operation that is used to generate a cluster state is highly nonlinear, and its proposed implementations are nondeterministic. This results in significant overhead and presents an impediment to the generation of large-scale clusters. 

While challenging nonlinear operations are still required for universal quantum computation, the generation of CV cluster states with current technology is entirely deterministic. In particular,
\begin{enumerate}
\item Any CV graph state can be prepared completely via the interaction of squeezed vacuum states through a network of linear optical elements. Not only do we avoid the need for nonlinear interactions, but online squeezing is also unnecessary.
\item Once the cluster state is prepared, any multi-qumode Gaussian operation may be implemented entirely by quadrature measurements (homodyne detection).
\item The addition of photon counting allows for universal quantum computation.
\end{enumerate}

Indeed, CV clusters of up to four qumodes have already been experimentally realized~\cite{Yonezawa2008,Yukawa2008,Su2007}.  In addition, recent results show that the network of linear optical elements may be eliminated entirely in favor of frequency-encoded qumodes and a single OPO~\cite{Menicucci2007,Menicucci2008,Flammia2008a}.  Such a method would be able to create a CV cluster state in just one step and in a single beam of light.  Some such proposals also have significant scaling potential~\cite{Menicucci2008,Flammia2008a}.  What follows, however, will focus on the method described in Item~1 and discussed in detail in Ref.~\cite{vanLoock2007}.  Item~2 suggests that once such clusters are available, they can be immediately tested by implementing protocols involving information distribution and other Gaussian operations.  For example, this result immediately offers an experimentally viable method to use offline squeezed resources to perform squeezing operations online;
such an online CV gate operation using offline CV resource states can then be not only ``universally'' applied to arbitrary optical signal states~\cite{Yoshikawa2007,Yoshikawa2008,Filip2005}, but would also no longer require adjustment of the offline resources to achieve
different squeezing gates, as the CV cluster states provide a universal resource for Gaussian computation together with homodyne detectors~\cite{vanLoockJOSA2007}. Finally, while accurate photon counting remains experimentally challenging, Item~3 implies that universal quantum computation is nevertheless possible.


\subsection{Cluster State Generation}

The naive canonical method to generate a given CV cluster state would be to apply the theoretical definition directly, i.e., apply $\CZ$ interactions to a collection of squeezed states. While this method is conceptually simple, it is not very practical. The $\CZ$ operation does not conserve photon number, and requires the use of two single-mode online squeezers~\cite{Braunstein2005}.

In a more practical approach, in Ref.~\cite{vanLoock2007}, it was shown that online squeezers are not needed at all.
Any desired CV graph state of accuracy $s$ is a pure multi-mode Gaussian state, and hence the only necessary online components are passive linear optics~\cite{Braunstein2005,vanLoock2007}. To make this precise, consider the generation of a graph state $\ket{\phi(s)}$ corresponding to some graph $G$. Recall that $\ket{\phi(s)}$ is defined by application of an appropriate sequence of $\CZ$ gates to a collection of squeezed states $S(s)\ket{0}$.

\begin{figure}[tb]
\includegraphics[width=\columnwidth]{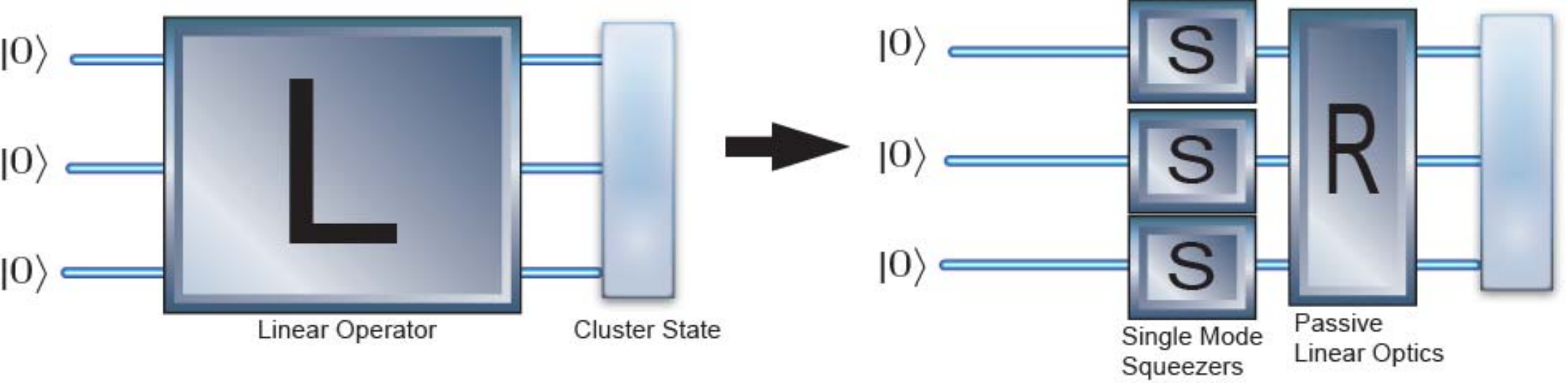}
\caption{\label{fig:cluster_prepare} A CV graph state can be generated by a Gaussian unitary acting on a collection of vacuum states~\cite{vanLoock2007}.  The Heisenberg action of this Gaussian is given by a symplectic linear operator~$L$ acting on the phase space of quadrature operators~$(\hat{q}_1,\hat{q}_2,\ldots,\hat{p}_1,\hat{p}_2,\ldots)$. This linear action can always be decomposed into the passing of offline squeezed states through a network of passive linear optical elements~\cite{Braunstein2005}.}
\end{figure}

The sequence of Gaussian transformations that take a collection of vacuum states to $\ket{\phi(s)}$ is represented succinctly in the Heisenberg picture.  Let $\hat{\mathbf{v}}$ denote the vector of quadrature operators:~$\hat{\mathbf{v}} = (\hat{q}_1,\hat{q}_2,\ldots,\hat{p}_1,\hat{p}_2,\ldots)$.  In the Heisenberg picture, the quadratures are transformed according to:
\begin{equation}
\label{eq:vtransform}
\hat{\mathbf{v}} \rightarrow \mathcal{C} \mathcal{S}(s) \hat{\mathbf{v}} = \mathcal{M}(s) \hat{\mathbf{v}},
\end{equation}
where $\mathcal{S}(s)$ represents the squeezing of each vacuum mode to form the state~$S(s)\ket{0}$, and $\mathcal{C}$ represents application of $\CZ$ operations in accordance with the desired graph. Mathematically, these operations can be defined by their action on the quadrature operators:
\begin{align}\nonumber
\mathcal{S}(s) \hat{q}_i &= s \hat{q}_i,  & \mathcal{C} \hat{q}_i &= \hat{q}_i,
\\
\mathcal{S}(s) \hat{p}_i &= \hat{p}_i/s, & \mathcal{C} \hat{p}_i &= \hat{p}_i + \sum_{(v_j,v_i) \in E} \hat{q}_j.
\end{align}
Concatenation of the two operations gives an explicit form for $\mathcal{M}(s)$. We refer to $\mathcal{M}(s)$ as the generation matrix for $\ket{\phi(s)}$, which defines the Gaussian operation that generates $\ket{\phi(s)}$ {\it from the vacuum}. The singular value decomposition for this matrix then provides an explicit recipe for how it may be generated using only linear optics and offline squeezing~\cite{Braunstein2005} (See Fig. \ref{fig:cluster_prepare}). We refer to this method as the decompositional method~\footnote{In Ref.~\cite{vanLoock2007}, the term ``canonical'' cluster states was reserved for those states that are obtained by directly applying a network of $\CZ$ gates onto momentum-squeezed states. These canonical states can then also be created, after Bloch-Messiah decomposition, with offline squeezing and linear optics. In this sense, the decompositional
scheme is equivalent to the ``canonical'' scheme based on online $\CZ$ gates. In addition to those schemes resulting in the canonical cluster states, in Ref.~\cite{vanLoock2007}, an alternative protocol was derived (independent of the Bloch-Messiah reduction), where offline squeezed states are sent through passive linear optics under the constraint that the outgoing multi-mode state satisfies the quadrature nullifier conditions in the limit of infinite squeezing.
In this case, for finite squeezing, there exist output states different from the canonical cluster states. This larger family of cluster states
was referred to as cluster-type states, including many ``non-canonical'' cluster states, nonetheless satisfying the nullifier conditions in the limit of infinite squeezing. With regard to experiments, an important feature of these generalized states is that the anti-squeezing components are
suppressed by construction~\cite{Yukawa2008}.
In the present paper, in order to make a comparison between the scheme based on Bloch-Messiah reduction and that using direct $\CZ$ gates, the former is here referred to as the ``decompositional method'', while the latter shall be named the ``canonical method''. Later, in order to make this comparison, no distinction will be made between offline and online squeezing.}.

\subsubsection{A simple example}

We illustrate the basic principles of this process by considering the explicit generation of the two-mode graph state, with generation matrix
\begin{equation}
\mathcal{M}(s) = \left(\begin{array}{cccc}
s & 0 & 0 & 0   \\
0 & s & 0 & 0 \\
0 & s & s^{-1} & 0   \\
s & 0 & 0 & s^{-1}   \\
\end{array} \right).
\end{equation}
The four singular values of this matrix are given by $\lambda_1 = \lambda_2 = \lambda_{+}$, $\lambda_3 = \lambda_4 = \lambda_{-}$ where
\begin{equation}
\lambda_{\pm} = \frac{\sqrt{1 + 2s^4 \pm \sqrt{1 + 4 s^8}}}{\sqrt{2}s}.
\end{equation}
$\lambda_{\pm}$ specifies the amount of offline squeezing required to generate the two-mode graph state to an accuracy $s$ (notice that $\lambda_- = \lambda_+^{-1}$). That is, two squeezed states of magnitude $\lambda_+$ (i.e., $S(\lambda_+)\ket{0}$), together with passive linear optical elements, may be used to generate this simple cluster state. Since we have transferred the online squeezing of the $\CZ$ operation into extra squeezing during the preparation process, $\lambda_+$ is generally greater than $s$. To generate the two-mode cluster to the same accuracy, our initial resources must be squeezed to a greater extent. In the case of the two-qumode cluster,
\begin{equation}
\lambda_+ \sim \sqrt{2} s\,, \qquad s \gg 1.
\end{equation}
Thus in the usual case where high accuracy is desired, we need to begin with states with a factor $\sqrt{2}$ more squeezing to achieve a graph state of the same accuracy. This factor is known as the squeezing overhead.

Since online squeezing generally represents a much greater experimental challenge than its offline equivalent, the decompositional method has a clear advantage over the canonical approach~\cite{vanLoock2007}.  However, it is also fair to ask whether this method is superior in \emph{all} situations. To test this, we consider the limiting case where offline squeezing is assumed to be as costly as online squeezing. One answer to this (in the affirmative) was already given in Ref.~\cite{vanLoock2007}, wherein it is shown that extra local squeezing is required to obtain a canonically generated CV cluster state from an $N$-mode Gaussian state in standard form~\cite{Adesso2006}.  We will revisit this result from another angle.  In this case, the measure of the resource requirements is the total amount of squeezing required---whether online or offline---measured additively in units of~dB.

We consider the toy case of the two-mode cluster here and follow with the general case in the following section. To generate this state up to accuracy $s$, the canonical method has two actions where squeezers are required:
\begin{description}
\item[(a)] Squeezing two vacuum states to $S(s)\ket{0}$, which requires two squeezers $10 \log(s^2)$ dB each.
\item[(b)] Application of a single $\CZ$ gate (QND interaction), which requires two online squeezers of $4.18$~dB each~\cite{Braunstein2005}.
\end{description}
In contrast, the decompositional method requires squeezing in the following two steps:
\begin{description}
\item[(a)] Squeezing two vacuum states to $S(s)\ket{0}$, as before.
\item[(b)] Squeezing these two states further by $10 \mathrm{log}(2) \sim 3$ dB each in order to account for the required squeezing overhead of~$\sqrt{2}$.
\end{description}
Thus the decompositional method saves $(2~\text{modes}) \times (4.18~\text{dB/mode} - 3~\text{dB/mode}) = 2.36~\text{dB}$ of squeezing for all values of $s$. In the next section, we show that the superiority of this method extends to general cluster states.

\subsubsection{Resource Requirements for General Graph States}
The above example motivates a general question: how much offline squeezing is required to create a graph state to accuracy~$s$? In the case where $\CZ$ gates are directly applied, we would need $n$ squeezed states of magnitude~$s$. To generate such a state entirely by offline squeezing up to equal accuracy, the initial nodes would necessarily need to feature greater squeezing. The practicality of the decompositional method hinges on the size of this overhead.

In this section, we show that there exist classes of universal cluster states whose squeezing overhead per mode does not increase with the size of the cluster. In addition, as in the case of two-mode clusters, the decompositional method remains superior even if squeezing arbitrary states (online squeezing) was as easy as squeezing the vacuum (offline squeezing). This is facilitated by a succinct method that computes the offline squeezing required to generate any given graph state.
\begin{thrm}We may generate an $n$-mode graph state with graph $G$ to an accuracy~$s$ by passing $n$ squeezed vacuum states---$S(s_i)\ket{0}$, with $i = 1,\ldots,n$---through a network of linear optical gates. Let $\mathbf{A}$ be the adjacency matrix of $G$ \cite{West2000}. In the limit of large squeezing ($s \gg 1$), the level of squeezing required for each mode, $s_i$, depends linearly on $s$, such that
\begin{equation}
s_i = s \sqrt{1 + k^2_i},
\end{equation}
where $k_i$ are the singular values of $\mathbf{A}$. Thus, the squeezing overhead is bounded above by $\sqrt{1 + k^2_i}$.\end{thrm}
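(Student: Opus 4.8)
The plan is to exploit the explicit form of the generation matrix $\mathcal{M}(s) = \mathcal{C}\mathcal{S}(s)$ from \eqref{eq:vtransform} and to read off the offline squeezing from its Bloch--Messiah (singular value) decomposition. Ordering the quadratures as $\hat{\mathbf{v}} = (\hat{q}_1,\ldots,\hat{q}_n,\hat{p}_1,\ldots,\hat{p}_n)^T$, the two constituent operations have the block forms
\begin{equation}
\mathcal{S}(s) = \begin{pmatrix} sI & 0 \\ 0 & s^{-1}I \end{pmatrix}, \qquad
\mathcal{C} = \begin{pmatrix} I & 0 \\ \mathbf{A} & I \end{pmatrix},
\end{equation}
where $\mathbf{A}$ is the (symmetric) adjacency matrix of $G$, since $\mathcal{C}$ sends $\hat{p}_i \mapsto \hat{p}_i + \sum_{(v_j,v_i)\in E}\hat{q}_j = \hat{p}_i + (\mathbf{A}\hat{\mathbf{q}})_i$. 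Hence $\mathcal{M}(s) = \begin{pmatrix} sI & 0 \\ s\mathbf{A} & s^{-1}I \end{pmatrix}$, which is symplectic. The first step is to invoke the decompositional result of Refs.~\cite{Braunstein2005,vanLoock2007}: any symplectic $\mathcal{M}(s)$ factorizes as $\mathcal{M}(s) = O_1\, \mathrm{diag}(s_1,\ldots,s_n,s_1^{-1},\ldots,s_n^{-1})\, O_2$ with $O_1,O_2$ orthogonal symplectic and therefore realizable by passive linear optics. Applied to a collection of vacua, $O_2$ acts trivially, the diagonal part prepares the states $S(s_i)\ket{0}$, and $O_1$ is the interferometer; so generating $\ket{\phi(s)}$ from the vacuum requires precisely $n$ offline squeezers of magnitudes $s_i \geq 1$, namely the singular values of $\mathcal{M}(s)$ that exceed unity. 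It remains to compute these in terms of the spectrum of $\mathbf{A}$.

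The second step is the spectral reduction. The squared singular values are the eigenvalues of
\begin{equation}
\mathcal{M}(s)^T\mathcal{M}(s) = \begin{pmatrix} s^2(I+\mathbf{A}^2) & \mathbf{A} \\ \mathbf{A} & s^{-2}I \end{pmatrix}.
\end{equation}
Writing the spectral decomposition $\mathbf{A} = U\Lambda U^T$ with $\Lambda = \mathrm{diag}(k_1,\ldots,k_n)$ (the eigenvalues of $\mathbf{A}$; its singular values are $|k_i|$) and conjugating the full $2n\times 2n$ matrix by $\mathrm{diag}(U,U)$ --- itself orthogonal, hence eigenvalue-preserving --- block-diagonalizes the problem into $n$ decoupled $2\times 2$ blocks $\begin{pmatrix} s^2(1+k_i^2) & k_i \\ k_i & s^{-2} \end{pmatrix}$. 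Each block has determinant $1$ (the reciprocal pairing of singular values forced by symplecticity), so its eigenvalues $\mu_\pm^{(i)}$ satisfy $\mu_+\mu_- = 1$ and $\mu_+ + \mu_- = s^2(1+k_i^2) + s^{-2}$, giving $\mu_\pm^{(i)} = \tfrac12[\, s^2(1+k_i^2)+s^{-2} \pm \sqrt{(s^2(1+k_i^2)+s^{-2})^2 - 4}\,]$. The required offline squeezing for mode $i$ is the larger root, $s_i = \sqrt{\mu_+^{(i)}}\geq 1$ (taking $\mu_-$ would correspond to anti-squeezing).

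The final step is the large-squeezing limit: expanding for $s \gg 1$ gives $\mu_+^{(i)} = s^2(1+k_i^2) + O(s^{-2})$, hence $s_i = s\sqrt{1+k_i^2}\,(1 + O(s^{-4})) \to s\sqrt{1+k_i^2}$, which is the claimed formula; the overhead relative to the $n$ resource squeezers $S(s)\ket{0}$ of the direct $\CZ$ construction is $s_i/s = \sqrt{1+k_i^2}$, a function only of the adjacency spectrum of $G$ and independent of how $\ket{\phi(s)}$ is embedded in a larger cluster.

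I expect the main obstacle to be conceptual rather than computational: pinning down precisely why it is the \emph{diagonal part} of the Bloch--Messiah decomposition of $\mathcal{M}(s)$ --- and not, say, the squeezing in $\mathcal{S}(s)$ alone, or the singular values of some other factorization --- that is the correct operational measure of offline squeezing cost, together with checking that nothing is lost by discarding $O_2$ because the inputs are vacua. Once that is settled, the block-diagonalization via the eigenbasis of $\mathbf{A}$, the $2\times 2$ eigenvalue computation, and the $s\to\infty$ asymptotics are all routine.
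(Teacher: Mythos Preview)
Your proposal is correct and follows essentially the same approach as the paper: both identify the required offline squeezings with the large singular values of the generation matrix $\mathcal{M}(s)$ and reduce these to the spectrum of the adjacency matrix $\mathbf{A}$. The only difference is order of operations---you diagonalize $\mathcal{M}(s)^T\mathcal{M}(s)$ exactly via the eigenbasis of $\mathbf{A}$ into $2\times 2$ blocks and then take $s\to\infty$, whereas the paper drops the $s^{-1}$ terms first and reads the answer directly from the singular values of the block-column matrix $s\bigl(\begin{smallmatrix} I_n \\ \mathbf{A} \end{smallmatrix}\bigr)$.
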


\begin{proof}Consider a particular graph state with generation matrix~$\mathcal{M}(s)$. When the $1/s$ terms can be neglected, the $n$ largest singular values of $\mathcal{M}(s)$ then coincide with the singular values of the block-column matrix
\begin{equation}
s
\left(\begin{array}{c}
\mathbf{I}_n \\
\mathbf{A} \\
\end{array} \right),
\end{equation}
where $\textbf{I}_n$ denotes the $n \times n$ identity matrix. The result follows. $\blacksquare$
\end{proof}

Thus, the squeezing overhead of a given graph state is determined completely by the structure of its underlying graph. This theorem allows us to compute the exact resources required to generate any graph state. In certain situations, it is sufficient to know an upper bound on the amount of squeezed resources required.

\begin{thrm}To generate an $n$-mode graph state with graph $G$ to an accuracy~$s$ ($s \gg 1$), the maximal amount of squeezing required for any individual resource mode is bounded above by $(Ks)$, where
\begin{equation}\label{eqn:sq_bound}
K = \sqrt{1 + \mathrm{maxdeg}^2(G)},
\end{equation}
and where $\mathrm{maxdeg}(G)$ denotes the maximum degree of $G$, i.e.,\ the maximum number of edges connected to a single vertex.
\end{thrm}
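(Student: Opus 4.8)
The plan is to bound the singular values of the block-column matrix $s(\mathbf{I}_n;\,\mathbf{A})^T$ appearing in the previous theorem in terms of the maximum degree of $G$. By that theorem, the maximal required squeezing for any individual mode in the large-squeezing regime is $s\sqrt{1+k_{\max}^2}$, where $k_{\max}$ is the largest singular value of the adjacency matrix $\mathbf{A}$ (equivalently, the spectral radius of $\mathbf{A}$, since $\mathbf{A}$ is real symmetric). So it suffices to show $k_{\max} \le \mathrm{maxdeg}(G)$, from which $\sqrt{1+k_{\max}^2} \le \sqrt{1+\mathrm{maxdeg}^2(G)} = K$ follows immediately.

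First I would note that $\mathbf{A}$ is a real symmetric $0/1$ matrix (the graph is undirected, unweighted, with no self-loops), so its singular values coincide with the absolute values of its eigenvalues, and $k_{\max} = \rho(\mathbf{A})$, the spectral radius. Second, I would invoke the elementary bound $\rho(\mathbf{A}) \le \|\mathbf{A}\|_\infty = \max_i \sum_j |\mathbf{A}_{ij}|$, which for an adjacency matrix is exactly the maximum row sum, i.e.\ $\mathrm{maxdeg}(G)$. (Alternatively one can argue directly: if $\mathbf{A}\mathbf{x} = \lambda \mathbf{x}$ with $|x_i| = \|\mathbf{x}\|_\infty$ maximal, then $|\lambda|\,|x_i| = \bigl|\sum_{j \in N(i)} x_j\bigr| \le \mathrm{maxdeg}(G)\,|x_i|$, giving $|\lambda| \le \mathrm{maxdeg}(G)$.) Third, combining these gives $k_i \le k_{\max} = \rho(\mathbf{A}) \le \mathrm{maxdeg}(G)$ for every singular value $k_i$, hence
\begin{equation}
s_i = s\sqrt{1+k_i^2} \le s\sqrt{1+\mathrm{maxdeg}^2(G)} = Ks
\end{equation}
for each mode $i$, which is the claimed bound.

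I do not expect any serious obstacle here; the statement is essentially a corollary of the previous theorem plus the standard spectral-radius/row-sum inequality. The only point requiring minor care is the justification that singular values of $\mathbf{A}$ equal $|{\rm eigenvalues}|$, which uses symmetry of the graph's adjacency matrix — valid under the paper's standing assumption of undirected, unweighted graphs without self-loops. One could also remark that the bound is tight for regular graphs (e.g.\ the complete graph $K_n$, where $\rho(\mathbf{A}) = n-1 = \mathrm{maxdeg}$), so no better degree-only bound is possible, but that is a side observation rather than part of the proof.
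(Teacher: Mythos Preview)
Your proof is correct and genuinely more direct than the paper's. The paper establishes $k_i \le \mathrm{maxdeg}(G)$ in two stages: first, for an $m$-regular graph it writes $\mathbf{A}$ as a sum of $m$ permutation matrices (invoking, implicitly, a K\"onig-type decomposition of $0/1$ matrices with constant row and column sums) and applies the triangle inequality for the spectral norm, $\|\mathbf{A}\| \le \sum_j \|P_j\| = m$; second, for a general graph of maximum degree $m$ it embeds $G$ as a subgraph of an $m$-regular graph and uses that the spectral norm strictly decreases under edge removal. Your argument bypasses both steps by appealing directly to the row-sum bound $\rho(\mathbf{A}) \le \|\mathbf{A}\|_\infty = \mathrm{maxdeg}(G)$ (or the equivalent eigenvector argument you sketch), together with the observation that for a real symmetric matrix the largest singular value equals the spectral radius. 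This is shorter, avoids the nontrivial permutation-matrix decomposition and the graph-embedding step, and applies uniformly to all graphs at once; the paper's route, on the other hand, makes the regular-graph case transparently structural and highlights the monotonicity of the squeezing overhead under edge deletion, which is a useful qualitative takeaway even if not strictly needed for the bound.
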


\begin{proof}Consider first the case where $G$ is an $m$-regular graph (i.e.,\ each of its vertices has degree~$m$). Let $\mathbf{A}$ be its adjacency matrix. Each row and column of  $\mathbf{A}$ then sums to $K$. Therefore $\mathbf{A}$ may be written as the sum of $m$ permutation matrices, $P_1,P_2,\ldots,P_m$. Noting that the largest singular value of a given matrix is its spectral norm (denoted~$\norm{\cdot}$), we have
\begin{equation}
k_i \leq \norm{\mathbf{A}} \leq \sum_{j=1}^m \norm{P_j} \leq m,
\end{equation}
in agreement with Eq.~(\ref{eqn:sq_bound}).  To generalize this result to an arbitrary graph state~$G'$ with adjacency matrix~$\mathbf{A}'$ and maximum degree~$m$, observe that any graph of maximum degree~$m$ may be obtained by removing edges from an $m$-regular graph~$G$.  The spectral norm of an adjacency matrix strictly decreases with the removal of an edge, so we have $\norm{\mathbf{A}'} < \norm{\mathbf{A}} \leq m$. $\blacksquare$ \end{proof}

This theorem immediately implies that the squeezing overhead for universal cluster states of any fixed accuracy is bounded. Such states have maximal degree of $4$, and hence feature a squeezing overhead of $\sqrt{17}$. Thus, to guarantee the generation of a universal cluster to accuracy $s$, one would need to (a) generate a lattice of optical modes, each of which is squeezed up to $10 \log(s^2)$ dB and (b) proceed to squeeze each mode by a further $12.31$ dB. Meanwhile, quantum wires would require a maximum overhead of $\sqrt{5}$, and hence an additional $6.99$ dB of squeezing.

To see that the decompositional method is superior to the canonical method,
recall that each $\CZ$~gate requires two $4.18$ dB squeezers. In the case of a universal square lattice, the ratio of edges to vertices is~2:1. Thus, while the decompositional approach requires at most an additional $12.31$~dB per vertex, the two $\CZ$~gates applied per vertex would cost $16.72$~dB (since each $\CZ$ gate requires two online squeezers of magnitude $4.18$ dB). For a square lattice of size $N \times N$, the decompositional method would save approximately $4.41 N^2~\text{dB}$ of squeezing. Thus, not only is it more practical to perform universal quantum computation through squeezed offline resources, but it also turns out to be more efficient, i.e., cheaper in terms of squeezing resources required.

A typical setup would involve the generation of $n$ squeezed optical modes. These are then passed through a network of linear optical gates, of which the resulting entangled beams formally encode the desired cluster. Since there exist cluster states that are universal, the setup of this generation process does not need to be altered for different algorithms and hence may potentially be mass produced. The resulting beams can then be measured to perform the desired quantum computation.

As a final remark in this section, we come back to the question whether redundant, multiple-rail encoding may suppress the accumulation of finite-squeezing errors in a cluster computation~\cite{vanLoock2007}. According to Eq.~(\ref{eqn:sq_bound}), the offline squeezing per mode for generating a multiple-rail graph of accuracy $s$ with $m$ rails~\cite{vanLoock2007} is bounded above by $(K s)$ with $K=\sqrt{1 + m^2}$.
Therefore the initial squeezing variances may be as small as $1/(2 s^2) \times 1/(1+m^2)$, so roughly $1/(2 s^2 m^2)$ for large $m$.
This lower bound converges to zero faster than the actual reduction of the excess noise in the cluster computation which scales as $1/(2 s^2 m)$.
If one has access to squeezing resources with variances of $1/(2 s^2 m^2)$, one may better use them directly without multiple-rail encoding~\cite{vanLoock2007}. However, note that also here no complete proof for the effective failure of a
decompositional multiple-rail scheme is given, as the above analysis only relies upon bounds.

\subsection{Optical Cluster-State Computation}

The measurement of optical modes completes the optical implementation of CV cluster-state computation. Recall from Section~\ref{sec:m_comp} that {\it any} multi-qumode operation may be implemented by measurements that generate (a)~the shearing transformation $e^{i s\hat{q}^2/2}$ and (b)~the cubic phase gate $e^{i s\hat{q}^3/3}$ (while a nontrivial two-mode gate only requires measuring the $\hat p$ quadratures on a two-dimensional cluster state).

%

The implementation of (a) is reasonably straightforward. The required measurement basis, $\hat{p} + s\hat{q}$, is a rotated quadrature basis. We can write it in the standard form $r [\sin(\theta) \hat{q} + \cos(\theta) \hat{p}] = r \hat{p}_\theta$, where
\begin{equation}
r = \sqrt{1 + s^2}, \qquad \tan(\theta) = s.
\end{equation}
Thus, the optical implementation involves measurement in the rotated quadrature basis~$\hat{p}_\theta$, followed by a rescaling of the result by a factor of $r$. Therefore, all multi-qumode Gaussian operations may be achieved via simple quadrature measurements on a sufficiently connected graph state. While such operations are insufficient for universal quantum computation, they allow for general graph transformations, and thus many optical experiments that test the foundations of cluster-state quantum computation can be performed using just quadrature measurements---i.e.,\ homodyne detection.

The physical implementation of the cubic phase gate is more challenging. Since the required Hamiltonian to be implemented is no longer quadratic in the quadrature operators, a nonlinear optical element is required~\cite{Lloyd1999}. Two separate strategies may be employed, involving either embedding the nonlinear resource within the cluster (making it a non-Gaussian state) or using photon counting on the modes of an existing (Gaussian) graph state.

\subsubsection{Quantum Computation by Nonlinear Resources}

In the standard model of cluster-state computation, all qumodes are initialized in the state $\ket{0}_p$ prior to the entangling operation. The quantum computation is entirely dictated by our choice of measurement basis $D^\dag \hat{p} D$. However, it is also possible to encode the computation-gates~$D$ within our initial resource. To see this, note that since $D$ commutes with $\CZ$, the circuit
\[
\Qcircuit @C=1em @R=1em {
    \lstick{\ket{\phi}} & \ctrl{1} & \qw & \qw & \measureD{\hat p} & \rstick{m_1} \cw
     \\
    \lstick{\ket{0}_p} & \ctrl{0} & \ctrl{1} & \qw & \measureD{D^\dag \hat{p} D} & \rstick{m_2} \cw
    \\
    \lstick{\ket{0}_p} & \qw & \ctrl{0} & \qw & \qw\\
}
\]
is operationally equivalent to
\begin{eqnarray}\label{eqn:resource_circuit}
\Qcircuit @C=1em @R=1em {
    \lstick{\ket{\phi}} & \ctrl{1} & \qw & \qw & \measureD{\hat{p}} & \rstick{m_1} \cw
     \\
    \lstick{D \ket{0}_p} & \ctrl{0} & \ctrl{1} & \qw & \measureD{\hat{p}} & \rstick{m_2} \cw
    \\
    \lstick{\ket{0}_p} & \qw & \ctrl{0} & \qw & \qw\\
}
\end{eqnarray}
Therefore, instead of measuring in the $D^\dag \hat{p} D$ basis, we may have used instead the resource state $D \ket{0}_p$ for creating the initial cluster. Since the cubic phase gate $D = e^{is\hat{q}^3/3}$ allows for universal quantum computation, this observation suggests that the cubic phase state, $e^{is\hat{q}^3/3} \ket{0}_p$, will have the same effect. One method to optically generate such states is given in Ref.~\cite{Gottesman2001}.

Should these states be available, they may be attached at set locations within a universal cluster state in place of the usual~$\ket{0}_p$.  The resulting {\it non-Gaussian} cluster would be an improved resource for universal quantum computation.  Any CV unitary may be implemented employing such clusters, {\it even when one is limited to quadrature measurements only}. Of course, generating non-Gaussian quantum states remains an experimental challenge, and the cubic phase state is no exception.

\subsubsection{Cluster-State Implementation of the Cubic Phase Gate}
The previous observation suggests that if there exists a method of generating the cubic phase state by viable single-qumode measurements on a standard cluster state, then cubic phase gates may be applied to arbitrary inputs.

One possible approach~\cite{Gottesman2001} involves photon counting on one arm of a displaced two-mode squeezed state. This procedure may be summarized by the following quantum circuit:
\begin{eqnarray}
\label{circ: GKP cubic phase state}
    \Qcircuit @C=1em @R=1.8em {
     \lstick{S(s^{-1}) \ket 0}  & \qw & \multigate{1}{B}  &  \qw  & \gate{Z(r)} & \measureD{\hat{n}} & \rstick{n} \cw \\
       \lstick{S(s) \ket 0} & \qw & \ghost{B} & \qw 
     &  \rstick{\approx e^{i\gamma(n)\hat q^3} \ket{0}_p} \qw
     }
\end{eqnarray}
This circuit entangles two highly squeezed states, $S(s) \ket {0}$ and $S(s^{-1}) \ket {0}$, with $s \gg 1$, via a standard beamsplitter interaction $B$. A large momentum displacement~$Z(r)$, with $r \gg s$, is applied to the resulting two-mode squeezed state. We then make a photon counting measurement $\hat{n}$ on the displaced mode, which \textit{approximately} collapses the unmeasured qumode into a cubic phase state~$e^{i\gamma(n) \hat q^3} \ket{0}_p$, dependent of the measurement result $n$ through $\gamma(n) = (6 \sqrt{2n + 1})^{-1}$.  This procedure is essentially a measurement-based quantum computation---it involves the application of suitable measurements on an entangled resource. Thus, we may recast it into the form of a standard cluster-state computation. The two-mode squeezed state generated coincides with a two-qumode cluster state modulo a local Fourier transform on one of the nodes.  These observations allow us to construct a circuit that is consistent with the cluster-state formalism, while also functionally equivalent to Circuit~\eqref{circ: GKP cubic phase state}, i.e.,
\begin{eqnarray}
\label{circuit: cv_phase_state}
    \Qcircuit @C=1em @R=1.8em {
     \lstick{S(s) \ket 0}  & \qw & \ctrl{1}  &  \qw  & \gate{X(r)} & \measureD{\hat{n}} & \rstick{n} \cw \\
       \lstick{S(s) \ket 0} & \qw & \ctrl{0} & \qw 
     &  \rstick{\approx e^{i\gamma(n)\hat q^3} \ket{0}_p} \qw
     }
\end{eqnarray}
In this circuit, the initial entangled resource is a standard two-qumode cluster arranged in a linear configuration. The quadratures of this state are rotated with respect to the preparation using a simple beamsplitter, so a position displacement is applied to the first qumode, followed by a photon counting measurement. Just as in Circuit~\eqref{circ: GKP cubic phase state}, the second qumode is then collapsed to an approximate measurement-dependent cubic phase state.

Should the above circuit be attached to the second qumode of Circuit (\ref{eqn:resource_circuit}), we may apply the cubic phase gate $e^{i\gamma(n) \hat{q}^3}$ to an arbitrary input state. A cubic phase gate $e^{i a \hat{q}^3}$, for any~$a$, may be decomposed into a combination of $e^{i\gamma(n) \hat{q}^3}$ and two squeezers that depend on both~$n$ and~$a$~\cite{Gottesman2001}:
\begin{equation}\nonumber
S^\dag(t(n)) e^{i\gamma(n) \hat{q}^3} S(t(n)) = e^{i a \hat{q}^3}, \qquad  t(n) = \left[a/\gamma(n)\right]^{1/3}.
\end{equation}
The addition of these squeezers to Circuit (\ref{eqn:resource_circuit}) gives a measurement-based scheme to implement $e^{i a \hat{q}^3}$ for any desired~$a$ (modulo measurement dependent shifts in phase space):
\begin{eqnarray}
\Qcircuit @C=.8em @R=1em {
    \lstick{\ket{\phi}} & \gate{S(t(n))} & \ctrl{1} & \qw & \measureD{\hat{p}} & \rstick{m_1} \cw
     \\
    \lstick{e^{i\gamma(n)\hat{q}^3}  \ket{0}_p} & \qw &\ctrl{0} & \ctrl{1} & \measureD{\hat{p}} & \rstick{m_2} \cw
    \\
    \lstick{\ket{0}_p} & \qw & \qw & \ctrl{0}  & \gate{S^\dag(t(n))}  & \rstick{e^{ia\hat{q}^3} \ket{0}_p} \qw
}
\end{eqnarray}
Since the squeezing operation $S(t(n))$ is Gaussian, it can always be implemented by a suitable sequence of quadrature measurements on a linear cluster state. Combining this circuit with Circuit~(\ref{circuit: cv_phase_state}) leads to a cluster-state implementation of the cubic phase gate (see Fig.~\ref{fig:cubic_cluster}).

Observe that since the squeezing strength~$t(n)$ is dependent on the outcome~$n$ of the photon counting measurement (in addition to the fixed parameter~$a$), implementation of the squeezers must be done \emph{after} the photon counting. As with non-Clifford group operations on qubit clusters, adaptive measurements are involved, and hence the order of the measurements now matters.

\begin{figure}[tb]
\centering
\includegraphics[width=\columnwidth]{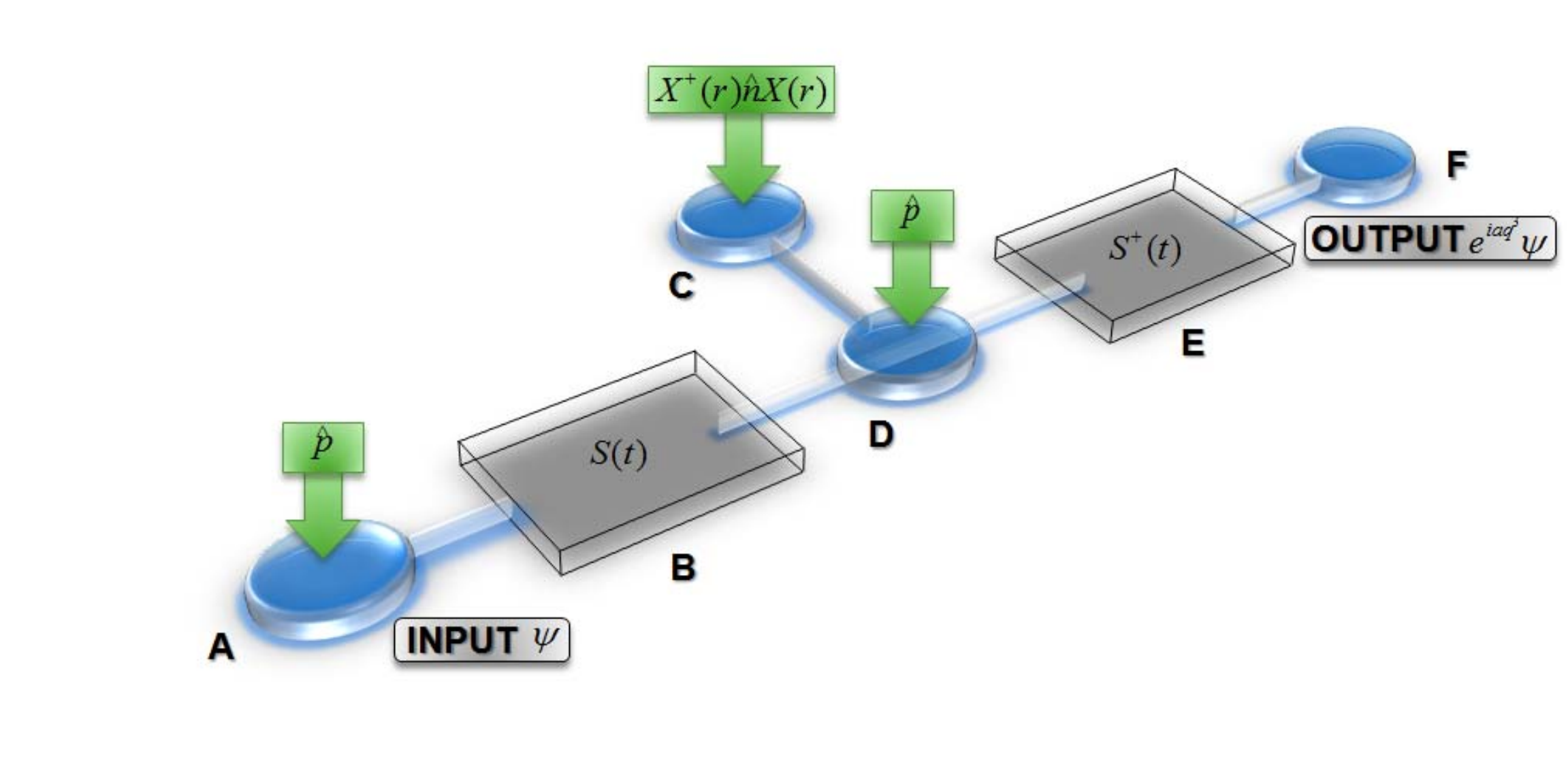}
\caption{\label{fig:cubic_cluster} A proposed cluster-state implementation of the cubic phase gate $e^{i a \hat{q}^3}$ applied to an arbitrary input state~$\ket{\phi}$. The displaced number-state measurement on node C implements Circuit~\eqref{circuit: cv_phase_state}, resulting in the generation of a cubic phase state at D. This nonlinear resource state, together with sub-clusters that generate the $n$-dependent squeezing corrections (B and~E), allows us to apply a cubic phase gate of any specified strength.}
\end{figure}


\section{Discussion and Conclusion}

Quantum information processing through the use of CV cluster states is a recent development in the field of quantum computation. While its basic principles have already been introduced in Ref.~\cite{Menicucci2006}, here we fleshed out the details of the protocol in a way that we hope will facilitate further understanding of CV cluster states, along with their potential optical implementation.

The Schr\"{o}dinger representation of CV graph states used in Ref.~\cite{Menicucci2006} becomes unwieldy as we explore graph states of nontrivial size. In such cases, Heisenberg nullifiers and Wigner functions are potentially useful tools. We showed how these representations may be utilized to derive rules of thumb on how graph states transform under measurements, as well as the existence of a CV cluster state that can be used as a resource for arbitrary CV operations.

The optical implementation of CV cluster states has also been further explored.  When the decompositional method was initially introduced in Ref.~\cite{vanLoock2007}, one of the primary concerns was that the price we pay for avoiding the need to perform online squeezing was an excessively large overhead in the extra offline squeezing required. Our results alleviate this concern.
We proved that the squeezing overhead per mode, when only offline squeezing is used for a given accuracy of the CV cluster,
does not increase with the size of a universal CV cluster state.
The upper bounds derived on the necessary amount of offline squeezing indicate that the decompositional approach has significant advantages over the direct approach through QND interactions~\cite{Menicucci2006}, even in cases where online squeezing is no more costly than its offline counterpart. While universal quantum computation using CV cluster states may be no less challenging than its qubit counterpart, the generation of CV clusters---either using the decompositional method~\cite{vanLoock2007} or through one-step generation using a single OPO~\cite{Menicucci2008}---is potentially more viable than in corresponding optical qubit schemes~\cite{Nielsen2004}.

To perform universal quantum computation, we adapted the experimental generation of the cubic phase gate as given in Ref.~\cite{Gottesman2001} to the cluster-state formalism. In addition, by extending the CV cluster-state framework to include the use of non-Gaussian resource states, we showed that possession of a suitable non-Gaussian state is sufficient for universal quantum computation within the cluster-state framework, even when only Gaussian operations, i.e., homodyne detections are employed during the cluster computation.
This leads to promising possibilities for universal CV cluster-state computation. One could, for example, envision that difficult nonlinear measurements are used to generate non-Gaussian resource states offline, which may then be distributed to consumers who are limited only to simpler measurements---i.e., quadrature homodyne detections. The consumers can nevertheless use the non-Gaussian states as resources for universal quantum computation.

In theory, all the ingredients for universal CV cluster-state computation have been developed. In particular, since the necessary squeezing resources
for creating a cluster of given accuracy have been shown to be independent of the size of a universal cluster state, scalability would only depend on the ability of suppressing the accumulation of errors at every measurement step during the cluster computation. At least for homodyne detections with near-unit efficiency, these errors mainly originate from the finite squeezing and they grow linearly with the length of the cluster. This could be compensated by increasing the accuracy of the cluster, hence making the squeezing per mode again dependent of the size of the cluster and the computation. Alternatively, some form of error correction may achieve full scalability in a strict sense, similar to fault tolerant schemes
for qubit quantum computation. To find an efficient method for CV quantum error correction, combined with a practical scheme for incorporating a non-Gaussian element into the Gaussian CV cluster-state framework, remains the main challenge to scalability and universality of the CV approach.
Nonetheless, our results here are an important step towards small-scale proof-of-principle demonstrations of CV cluster computation.

\acknowledgments

M.G., C.W. and T.C.R thank the support of the Australian Research Council (ARC).  N.C.M. was supported in part by the U.~S. Department of Defense and the U.~S. National Science Foundation.  Research at Perimeter Institute is supported by the Government of Canada through Industry Canada and by the Province of Ontario through the Ministry of Research \& Innovation. P.v.L. acknowledges support from the Emmy Noether programme of the DFG in Germany.


\begin{appendix}

\section{The Nullifier Formalism for Quadrature Measurements}\label{sec:nullifier}

Graph state transformations through quadrature measurements have an efficient nullifier description. Consider a measurement $\hat{p}_i$ on a graph state $\ket{\phi}$ nullified by a vector space with basis elements $\{H_1,H_2,\ldots, H_n\}$. There are only two distinct cases:
\begin{enumerate}
\item $\hat{p}_i$ commutes with all basis elements~$H_j$.  Then, $\ket{\phi}$ must be an eigenstate of $\hat{p}_i$ with some eigenvalue $m_i$.
\item There exists exactly one basis element that does not commute with $\hat{p}_i$; call this element $H_i$.
\end{enumerate}
This is because if there were to exist two or more basis elements that do not commute with $\hat{p}_i$, say $H_i$ and $H_j$, then there exists a constant $k$ such that $[\hat{p}_i,H_i + k H_j] = 0$. Thus, it is always possible to construct a basis such that only one element does not commute with $\hat{p}_i$, with $[\hat{p}_i,H_i] = -i$.

In case (1), $H_k \hat{p}_i \ket{\phi} = \hat{p}_i H_k \ket{\phi} = 0$ for each nullifier, and thus $\hat{p}_i \ket{\phi} = m_i \ket{\phi}$ for some $m_i$. Hence $(\hat{p}_i - m_i) \ket{\phi}$ must be a nullifier of $\ket{\phi}$ for some $m_i$. The measurement of $\hat{p}_i$ yields $m_i$, and the state remain undisturbed.

In case (2), we write $H_i$ as $\hat{q}_i + \sum_j c_j \hat{p}_j + \sum_{j \neq i} d_j \hat{q}_j + c_0$ for some constants $c_j$ and $d_j$. Let the measurement result be $m_i$.  Then, the transformed nullifier algebra is obtained by replacing $H_i$ with $\hat{p}_i - m_i$. Since the $i^{\text{th}}$ mode is now disentangled from the cluster and no longer interesting, we may discard it by choosing a basis such that all but one of the elements (namely $\hat{p}_i - m_i$) acts as the identity on the $i^{\text{th}}$ mode. Measurement in the computational basis~$\hat{q}$ can be analyzed analogously. More general quadrature measurements of the form $\hat{p}_{s\hat{q}^2/2} = \hat{p} + s \hat{q}$ can be treated in this formalism by application of the unitary $e^{is\hat{q}^2/2}$, followed by a standard momentum measurement.

We demonstrate this formalism on the case where $\hat{p}$-measurements are made on the first two qumodes (with measurement results $m_1$ and $m_2$) of a linear three-qumode cluster, defined by the three nullifiers
\begin{align}
\{H_1,H_2,H_3\} = \{\hat{p}_1 - \hat{q}_2, \hat{p}_2 - \hat{q}_1 - \hat{q}_3, \hat{p}_3 - \hat{q}_2\}.
\end{align}
$\hat{p}_1$ commutes with all nullifiers except $H_2$, so we replace $H_2$ with $\hat{p}_1 - m_1$, giving the nullifiers $\{\hat{p}_1 - \hat{q}_2, \hat{p}_1 - m, \hat{p}_3 - \hat{q}_2\}$, which defines the same state as $\{\hat{q}_2 - m_1, \hat{p}_3 - m_1\}$ after discarding the measured mode.

Repeating this procedure for the measurement of $\hat{p}_2$ results in the nullifier of the output state being~$\hat{p}_3 - m_1$.  Thus the remaining unmeasured node is in the state $\ket{m_1}_p$, in agreement with Eq.~\eqref{eqn:single_circuit}.


\end{appendix}


\end{document}